\newtheorem{Theorem}{Theorem}[section]
\newtheorem{Lemma}[Theorem]{Lemma}
\newtheorem{Corollary}[Theorem]{Corollary}
\newcommand{\D}[2]{\frac{\mathrm{d}#1}{\mathrm{d}#2}}
\newcommand{\T}{^{\operatorname{T}}}
\newcommand{\Osym}{\text{\usefont{OMS}{cmsy}{m}{n}O}}
\newcommand{\N}{\mathbb{N}}
\newcommand{\Z}{\mathbb{Z}}
\newcommand{\R}{\mathbb{R}}
\newcommand{\C}{\mathbb{C}}
\newcommand{\Cz}{\C^{2}}
\newcommand{\MzC}{\mathrm{M}_{2}(\C)}
\newcommand{\Poly}{\mathscr{P}}
\newcommand{\E}{\mathrm{e}}
\newcommand{\I}{\mathrm{i}}
\newcommand{\re}{\operatorname{Re}}
\newcommand{\im}{\operatorname{Im}}
\newlength{\ml}\settowidth{\ml}{$-$}
\newcommand{\ms}{\hspace*{\ml}}
\title[Eigenvalues of the Coulomb spheroidal wave equation]{Computation of the eigenvalues for the\\ angular and Coulomb spheroidal wave equation}
\author{Harald Schmid}
\email{h.schmid@oth-aw.de}
\address{University of Applied Sciences Amberg-Weiden, Amberg, Germany}
\keywords{spheroidal eigenvalues, spheroidal wave functions, numerical computation}
\subjclass{33E10, 33F05, 34L16, 65D20}
\begin{document}

\begin{abstract}
In this paper we study the eigenvalues of the angular spheroidal wave equation and its generalization, the Coulomb spheroidal wave equation. An associated differential system and a formula for the connection coefficients between the various Floquet solutions give rise to an entire function whose zeros are exactly the eigenvalues of the Coulomb spheroidal wave equation. This entire function can be calculated by means of a recurrence formula with arbitrary accuracy and low computational cost. Finally, one obtains an easy-to-use method for computing spheroidal eigenvalues and the corresponding eigenfunctions.
\end{abstract}

\maketitle

\section{Introduction}

The angular spheroidal wave equation (or ASWE for short)
\begin{equation} \label{ASWE}
\D{}{x}\left((1-x^2)\D{}{x}w(x)\right) + \left(\lambda + \gamma^2(1-x^2) - \frac{\mu^2}{1-x^2}\right)w(x) = 0,\quad -1<x<1
\end{equation}
appears in many fields of physics and engineering like quantum mechanics, electromagnetism, signal processing etc. In particular, if $\mu=m$ is an integer and $\gamma^2$ is real, then the separation of the Helmholtz equation in prolate ($\gamma^2>0$) or oblate ($\gamma^2<0$) spheroidal coordinates results in a second order ODE of the form \eqref{ASWE}. The ASWE is a special case of the generalized spheroidal wave equation (GSWE)
\begin{equation} \label{GSWE}
\D{}{x}\left((1-x^2)\D{}{x}w(x)\right) + \left(\lambda + \beta x + \gamma^2(1-x^2) - \frac{\mu^2+\alpha^2+2\alpha\mu x}{1-x^2}\right)w(x) = 0
\end{equation}
which in turn is equivalent to the confluent Heun differential equation (see \cite[Section 3.1.2]{SL:2000}). If we set $\alpha=0$, then we get the Coulomb spheroidal wave equation (CSWE)
\begin{equation} \label{CSWE}
\D{}{x}\left((1-x^2)\D{}{x}w(x)\right) + \left(\lambda + \beta x + \gamma^2(1-x^2) - \frac{\mu^2}{1-x^2}\right)w(x) = 0
\end{equation}
The numbers $\lambda\in\C$ for which \eqref{CSWE} has a nontrivial bounded solution $w(x)$ on $(-1,1)$ are the eigenvalues of the CSWE, and the corresponding eigenfunctions $w(x)$ are the so-called Coulomb spheroidal wave functions. They provide, for example, exact wave functions for a one-electron diatomic molecule with fixed nuclei (see \cite[Chapter 9]{Falloon:2001}), and they also arise in gravitational physics (see e.g. \cite{Leaver:1986}). In this paper, we are mainly concerned with equation \eqref{CSWE}, whereas the results we obtain are obviously applicable to the ``ordinary'' spheroidal wave equation \eqref{ASWE} as well.

Over the years, various approaches for calculating the eigenvalues of the ASWE and CSWE have been developed. One of these standard methods (see \cite{MS:1954} or \cite{FAW:2003}) is based on a series expansion by means of associated Legendre functions: A three term recurrence relation for the coefficients of this expansion results in a transcendental equation involving continued fractions, whose roots are the spheroidal eigenvalues. The numerical values for $\lambda$ can be computed by an iterative method (cf. \cite{Flammer:1957}) or can be approximated by the eigenvalues of an associated symmetric tridiagonal matrix (see e.g. \cite{Hodge:1970}). There is also a different approach, a type of shooting method, where the regular Floquet solutions at $x=\pm 1$ are smoothly matched at $x=0$; the eigenvalues of the ASWE coincide with the zeros of the corresponding Wronskian, cf. \cite{Skoro:2015}.

The strategy that we use in the present paper is also based on matching certain Floquet solutions, but not for the CSWE \eqref{CSWE} itself. Instead, we study an associated linear $2\times 2$ differential system of the type $\eta'(z) = (\frac{1}{z}A_0 + \frac{1}{z-1}A_1 + C)\eta(z)$ with two regular-singular points at $z=0$ and $z=1$. The structure of this system is similar to that of the Chandrasekhar-Page angular equation. We can therefore determine the eigenvalues using an approach analogous to that in \cite[Lemma 3]{BSW:2004}, proceeding as follows: In a neighborhood of $z=0$, the $2\times 2$ system has a fundamental set which consists of a holomorphic solution $\eta_0(z)=\sum_{k=0}^\infty z^k d_k$ and a solution which behaves like $z^{-\mu-1}e_1$ (here and in the following, $\{e_1,e_2\}$ denotes the standard basis of $\Cz$). In addition, there is a second set of fundamental solutions $\eta_1(z)$ and $\eta_2(z)$, where $\eta_1(z)\sim (z-1)^{-1}e_1$ as $z\to 1$ and $\eta_2(z)$ is bounded near $z=1$. The solutions $\eta_0(z)$ and $\eta_1(z)$, $\eta_2(z)$ are related by a linear combination $\eta_0(z) = c_1\eta_1(z) + c_2\eta_2(z)$ with some {connection coefficients $c_1$ and $c_2$, which depend holomorphically on $\lambda$. We will prove that \eqref{CSWE} has a nontrivial solution $w(x)$ which is bounded on $(-1,1)$ if and only if the associated system has a nontrivial solution $\eta(z)$ which is holomorphic at $z=0$ \emph{and} $z=1$. This solution must be a constant multiple of $\eta_0(z)$ and $\eta_2(z)$, which means that $c_1(\lambda)=0$. Finally we use some results of R. Schäfke and D. Schmidt from \cite{Schaefke:1980}, \cite{SS:1980} to prove that the connection coefficient $c_1$ is the limit of a sequence generated by the series coefficients $d_k\in\Cz$, and it is even possible to specify the order of convergence. The vectors $d_k$ can be computed with a relatively simple recursion formula, and thus also $c_1$ can be evaluated by a straightforward algorithm with arbitrary accuracy. Subsequently, only the zeros of $c_1=c_1(\lambda)$ have to be determined, and for this purpose one can use, for example, the secant method.

Since the mathematical background is somewhat tedious and rather technical, we first present the main result with some numerical examples in \cref{sec:Main} before we prove the main result in \cref{sec:Proof}. Finally, in \cref{sec:GSWE} we briefly outline how to obtain a corresponding algorithm for the eigenvalues and eigenfunctions of the generalized spheroidal wave equation \eqref{GSWE}.

\section{Main theorem and numerical results}
\label{sec:Main}

In the following we may assume without loss of generality that $\re\mu\geq 0$ holds, since the differential equation \eqref{CSWE} does not change when $\mu$ is replaced by $-\mu$. Now, for fixed values $\beta,\gamma\in\C$ and a given number $t\in\C$, we define a sequence of vectors $u_k,d_k\in\Cz$ by means of a recurrence relation
\begin{equation}\label{Recurrence}
\begin{split}
u_k & := \begin{pmatrix} 
0 &  \frac{t-\beta}{k}-\frac{2t}{k+\mu+1} \\[1ex] 
0 & -\frac{\mu+1}{k} \end{pmatrix} d_{k-1} - \begin{pmatrix} 
 \frac{t-\beta}{k(k+\mu+1)} & \frac{4\gamma^2}{k+\mu+1} \\[1ex]
-\frac{1}{k} & 0 \end{pmatrix} u_{k-1} \\
d_k & := d_{k-1}+u_k\quad\mbox{for}\quad k=1,2,3,\ldots\quad\mbox{with}\quad
u_0 = d_0 := \begin{pmatrix} \frac{\beta-t}{\mu+1} \\[1ex] 1 \end{pmatrix}
\end{split}\end{equation}

The following theorem is the main result; it describes how the sequence of vectors $(d_k)_{k=0}^\infty$ can be used to calculate the eigenvalues and eigenfunctions of the Coulomb spheroidal wave equation. 

\begin{Theorem} \label{thm:MainRes}
Let $\mu,\beta,\gamma\in\C$ be fixed, and suppose that either $\re\mu>0$ or $\mu=0$ holds. Further, let
\begin{equation*}
\Theta_k := \langle \vartheta,d_k\rangle = \vartheta\T d_k\quad\mbox{for}\quad k=1,2,3,\ldots
\end{equation*}
be the scalar product of the vector
\begin{equation} \label{theta}
\vartheta := \begin{pmatrix} \ms 1 \\[1ex] -\frac{\beta+t}{\mu+1} \end{pmatrix}
\end{equation}
and $d_k$ given by \eqref{Recurrence}. Then $\Theta_k=\Theta_k(t)$ is a polynomial of degree $k+1$ in $t\in\C$. Moreover, the limit 
\begin{equation*}
\Theta(t) := \lim_{k\to\infty}\Theta_k(t) 
\end{equation*}
exists for each $t\in\C$, and it has the following properties: 
\begin{enumerate}[\upshape (a)]
\item $\Theta:\C\longrightarrow\C$ is an entire function;
\item $\Theta_k(t) = \Theta(t) + \Osym(k^{\varepsilon-\mu-2})$ as $k\to\infty$ for each $t\in\C$ with arbitrary small $\varepsilon>0$;
\item $\lambda\in\C$ is an eigenvalue of the CSWE \eqref{CSWE} if and only if $t = \lambda-\mu(\mu+1)$ is a zero of $\Theta$. In this case
\begin{equation*}
w(x) := \left(\frac{1+x}{1-x}\right)^{\mu/2}\sum_{k=0}^\infty \tfrac{1}{2^k}e_2\T d_k(1+x)^k
\end{equation*}
is an eigenfunction corresponding to $\lambda$, i.e., a nontrivial solution of \eqref{CSWE} which is bounded on $(-1,1)$; it behaves like $w(x) = (1+x)^{\mu/2}\left(1 + o(1)\right)$ for $x\to -1$ and like $w(x) = (1-x)^{\mu/2}\left(c + o(1)\right)$ for $x\to 1$ with some constant $c\in\C$.
\item In the special case $\beta=\gamma=0$ the function $\Theta(t)$ becomes
\begin{equation*}
\Theta(t) 
= \frac{\cos\left((\tau-\mu)\pi\right)\Gamma(\mu+1)^2\Gamma(\tau+\frac{1}{2}-\mu)}{\pi\Gamma(\tau+\frac{1}{2}+\mu)} = \frac{\Gamma(\mu+1)^2}{\Gamma(\mu+\frac{1}{2}-\tau)\Gamma(\mu+\frac{1}{2}+\tau)}
\end{equation*}
where $\tau := \sqrt{t+(\mu+\tfrac{1}{2})^2}$ and $-\frac{\pi}{2}<\arg(\tau)\leq\frac{\pi}{2}$.
\end{enumerate}
\end{Theorem}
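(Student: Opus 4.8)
The plan is to show that $\Theta(t)$ is, up to a fixed nonzero constant, a connection coefficient of an auxiliary $2\times2$ Fuchsian system, and then to read off (a)--(d) from the structure of that system together with the results of Sch\"afke and Schmidt quoted in the introduction. First I would build the associated system. Substituting $z=\tfrac12(1+x)$ and peeling off the singular behaviour of \eqref{CSWE} at $x=-1$ via $w(x)=\bigl(\tfrac{1+x}{1-x}\bigr)^{\mu/2}v(z)$ moves the singular points to $z=0$ and $z=1$, and — after adjoining a suitably normalised version of $(1-z)v'(z)$ as a second unknown so that no spurious denominators survive — one arrives at $\eta'(z)=\bigl(\tfrac{1}{z}A_0+\tfrac{1}{z-1}A_1+C\bigr)\eta(z)$ with $A_0,A_1,C\in\MzC$ depending on $\mu,\beta,\gamma$ and affinely on $t=\lambda-\mu(\mu+1)$. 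Since $z=0$ is regular singular with exponents $0$ and $-\mu-1$, there is a holomorphic solution $\eta_0(z)=\sum_{k\ge0}z^kd_k$; inserting this ansatz and comparing coefficients of $z^k$ reproduces \emph{exactly} the recurrence \eqref{Recurrence}, with increments $u_k=d_k-d_{k-1}$ and starting value $d_0=u_0=(\tfrac{\beta-t}{\mu+1},1)\T$. At $z=1$ one has the fundamental pair $\eta_1(z)\sim(z-1)^{-1}e_1$, $\eta_2(z)$ bounded, and the connection relation $\eta_0=c_1\eta_1+c_2\eta_2$; as $\eta_0,\eta_1,\eta_2$ depend holomorphically on the polynomial coefficients of the system, $c_1=c_1(t)$ and $c_2=c_2(t)$ are entire.

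Next I would prove the equivalence announced in the introduction: with $z=\tfrac12(1+x)$ and $w(x)=\bigl(\tfrac{1+x}{1-x}\bigr)^{\mu/2}\,e_2\T\eta(z)$, a nontrivial $w$ bounded on $(-1,1)$ corresponds to a nontrivial $\eta$ holomorphic at both $z=0$ and $z=1$. Indeed the Frobenius exponents of \eqref{CSWE} at $x=\pm1$ are $\pm\mu/2$, and the substitution turns boundedness of $w$ near $x=-1$ into holomorphy of $\eta$ at $z=0$ (forcing $\eta$ to be a multiple of $\eta_0$) and boundedness near $x=1$ into holomorphy at $z=1$ (forcing $\eta$ to be a multiple of $\eta_2$), so both together say $\eta_0=c_2\eta_2$, i.e.\ $c_1=0$. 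Unwinding $\eta_0(z)=\sum z^kd_k$ with $z=\tfrac12(1+x)$ then gives precisely the series for $w(x)$ in (c), and the $x\to\pm1$ asymptotics follow from the exponents $\pm\mu/2$. This settles (c) once we know that $\Theta$ is a nonzero constant times $c_1$.

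That identification, and the sharp rate in (b), form the crux; here I would invoke \cite{Schaefke:1980,SS:1980}. Singularity analysis of $\eta_0$ at $z=1$ gives $d_k=\kappa\,c_1\,e_1+\Osym(k^{-\mu-1})$ for a fixed nonzero $\kappa$ (the constant part stemming from the $(z-1)^{-1}$–term of $\eta_0$, whose leading vector is a multiple of $e_1$), hence $\langle\vartheta,d_k\rangle\to\kappa\,c_1\langle\vartheta,e_1\rangle=\kappa\,c_1$; so $\Theta(t)=\kappa\,c_1\bigl(t+\mu(\mu+1)\bigr)$ exists, is entire — which is (a) — and vanishes exactly at the eigenvalues, completing (c). For the precise exponent one needs the finer description of \cite{Schaefke:1980,SS:1980}: $d_k$ admits an asymptotic expansion in descending powers $k^{-\mu-1},k^{-\mu-2},\dots$ (with a possible $\log k$ in the resonant cases where the exponents $-1$ and $\mu$ at $z=1$ differ by an integer), the $k^{-\mu-1}$–term pointing in the direction of the leading Frobenius vector of $\eta_2$ at $z=1$; the covector $\vartheta$ of \eqref{theta} is precisely the one annihilating that direction, so this term drops out of $\langle\vartheta,d_k\rangle$ and $\Theta_k(t)=\Theta(t)+\Osym(k^{\varepsilon-\mu-2})$, which is (b). The polynomial assertion is a direct induction on \eqref{Recurrence}: $t$ enters $A_0,A_1,C$ affinely, $d_0$ has degree $1$, each recurrence step raises the degree by exactly one, and a short check of leading coefficients shows that no cancellation occurs in $\langle\vartheta,d_k\rangle$, so $\deg\Theta_k=k+1$. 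I expect this third step — certifying that the \emph{explicit} covector \eqref{theta} is the functional furnished by the Sch\"afke--Schmidt theory, and that it genuinely upgrades the generic rate $k^{-\mu-1}$ to $k^{\varepsilon-\mu-2}$, resonant cases included — to be the main obstacle.

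Finally, for (d) with $\beta=\gamma=0$ the matrix $C$ vanishes, the associated system has only the singular points $0,1,\infty$, and is therefore hypergeometric — equivalently, \eqref{CSWE} is the associated Legendre equation. One checks that the relevant ${}_2F_1$ has lower parameter $\mu+1$ and numerator parameters $\tfrac12\mp\tau$ with $\tau=\sqrt{t+(\mu+\tfrac12)^2}$, so that $c_1$ is obtained from Gauss' classical connection formula for ${}_2F_1$ between $z=0$ and $z=1$. Combining that quotient of Gamma functions with the normalisation built into $d_0$ and the reflection formula $\Gamma(s)\Gamma(1-s)=\pi/\sin\pi s$ yields $\Theta(t)=\Gamma(\mu+1)^2/\bigl(\Gamma(\mu+\tfrac12-\tau)\Gamma(\mu+\tfrac12+\tau)\bigr)$, the first displayed form being the same expression rewritten via $\cos((\tau-\mu)\pi)$; alternatively, with $\beta=\gamma=0$ the recurrence \eqref{Recurrence} can be solved in closed form in the variable $\tau$ and the limit read off directly.
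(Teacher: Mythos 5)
Your proposal is correct and follows essentially the same route as the paper: an associated $2\times2$ Fuchsian system with singularities at $z=0,1$, identification of $\Theta$ with the connection coefficient $c_1(t)$, the Sch\"afke--Schmidt asymptotics $d_k=c_1e_1+c_2\,\Osym(k^{-\mu-1})b_0+\Osym(k^{\varepsilon-\mu-2})$ with $\vartheta$ chosen to annihilate $b_0$ (including the separate treatment of the resonant integer-$\mu$ case), and a classical connection formula for part (d). The only cosmetic differences are that the paper reaches the $\eta$-system via an intermediate first-order system in $(v,w)$ followed by the scalar gauge $\eta=z^{-\mu/2}(1-z)^{\mu/2}y$, and phrases (d) in terms of associated Legendre functions rather than the equivalent ${}_2F_1$ connection formula.
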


The proof of this theorem can be found in the next section. Before we address the numerical computation of the spheroidal eigenvalues, let us first have a look at the special case $\beta=\gamma=0$, in which \eqref{CSWE} reduces to the associated Legendre differential equation. Since the reciprocal Gamma function $\frac{1}{\Gamma(z)}$ is an entire function with simple zeros at $z=0,-1,-2,-3,\ldots$, the zeros of the function 
\begin{equation*}
\frac{\Gamma(\mu+1)^2}{\Gamma(\mu+\frac{1}{2}-\tau)\Gamma(\mu+\frac{1}{2}+\tau)}
\end{equation*}
in the sector $-\frac{\pi}{2}<\arg(\tau)\leq\frac{\pi}{2}$ are given by $\tau_n=n+\mu+\frac{1}{2}$, where $n$ is an arbitrary non-negative integer. Hence, according to (d) in \cref{thm:MainRes}, the zeros of $\Theta(t)$ are located at $t_n = \tau_n^2-(\mu+\frac{1}{2})^2=n(n+2\mu+1)$, and from (c) it follows that the eigenvalues of the associated Legendre differential equation are determined by $\lambda_n = t_n+\mu(\mu+1) = n(n+2\mu+1)+\mu(\mu+1)$, which coincides with the well-known formula
\begin{equation*}
\lambda_n = (n+\mu)(n+\mu+1),\quad n=0,1,2,3,\ldots
\end{equation*}

Now we return to the general case $\beta,\gamma\in\C$. A closer view on the recursion formula \eqref{Recurrence} shows that the first components of the vectors $d_k$ are not required for the calculation of $\Theta$. Using the entries of the vectors
\begin{equation*}
u_k = \begin{pmatrix} a_k \\[1ex] b_k \end{pmatrix},\quad
d_k = \begin{pmatrix} \ast \\[1ex] w_k \end{pmatrix}
\end{equation*}
we can deduce from \eqref{Recurrence} a more straightforward procedure for the computation of $\Theta(t)$.

\begin{Corollary} \label{cor:NumCom}
Suppose that either $\re\mu>0$ or $\mu=0$ holds. If we define 
\begin{equation*}
\begin{split}
a_k & := \frac{\beta-t}{k(k+\mu+1)}\,a_{k-1} - \frac{4\gamma^2}{k+\mu+1}\,b_{k-1} 
       + \frac{(t-\beta)(\mu+1)-(t+\beta)k}{k(k+\mu+1)}\,w_{k-1} \\
b_k & := \frac{1}{k}\,a_{k-1} - \frac{\mu+1}{k}\,w_{k-1},\quad w_k := b_k + w_{k-1},\quad
\Theta_k := \Theta_{k-1} + a_k - \frac{\beta+t}{\mu+1}\,b_k
\end{split}
\end{equation*}
for $t\in\C$ and $k=1,2,3,\ldots$ starting with $a_0 =\frac{\beta-t}{\mu+1}$, $b_0=w_0=1$, $\Theta_0 = -\frac{2t}{\mu+1}$, then
\begin{equation*}
\Theta_k = \Theta(t) + \Osym(k^{\varepsilon-\mu-2})
\end{equation*}
as $k\to\infty$ with arbitrary small $\varepsilon>0$. Moreover, $\lambda\in\C$ is an eigenvalue of the CSWE \eqref{CSWE} if and only if $t = \lambda-\mu(\mu+1)$ is a zero of $\Theta$,
and in this case the corresponding eigenfunctions are constant multiples of
\begin{equation*}
w(x) := \left(\frac{1+x}{1-x}\right)^{\mu/2}\sum_{k=0}^\infty\frac{w_k}{2^k}\,(1+x)^k
\end{equation*}
\end{Corollary}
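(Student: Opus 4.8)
The plan is to obtain the scalar scheme of \cref{cor:NumCom} simply by reading off the components of the vector recursion \eqref{Recurrence}, and then to transfer the analytic conclusions directly from \cref{thm:MainRes}.

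First I would set $u_{k-1}=\begin{pmatrix}a_{k-1}\\ b_{k-1}\end{pmatrix}$ and $d_{k-1}=\begin{pmatrix}\ast\\ w_{k-1}\end{pmatrix}$ and observe that in \eqref{Recurrence} the first column of the matrix acting on $d_{k-1}$ vanishes; hence the first entry of $d_{k-1}$ never enters the formulas, and the triple $(a_k,b_k,w_k)$ already satisfies a closed recursion. Carrying out the two matrix--vector products componentwise yields $b_k=\tfrac1k a_{k-1}-\tfrac{\mu+1}{k}w_{k-1}$ immediately, while for the first component one must combine $\tfrac{t-\beta}{k}-\tfrac{2t}{k+\mu+1}$ over the common denominator $k(k+\mu+1)$; the numerator collapses to $(t-\beta)(\mu+1)-(t+\beta)k$, which is precisely the stated coefficient of $w_{k-1}$ in $a_k$, and $-\tfrac{t-\beta}{k(k+\mu+1)}=\tfrac{\beta-t}{k(k+\mu+1)}$ is the coefficient of $a_{k-1}$. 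The identity $w_k=w_{k-1}+b_k$ is just the second component of $d_k=d_{k-1}+u_k$, and the initial data $a_0=\tfrac{\beta-t}{\mu+1}$, $b_0=w_0=1$ are the entries of $u_0=d_0$.

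Next I would handle $\Theta_k$. Using \eqref{Recurrence} and \eqref{theta}, $\Theta_k-\Theta_{k-1}=\vartheta\T(d_k-d_{k-1})=\vartheta\T u_k=a_k-\tfrac{\beta+t}{\mu+1}\,b_k$, which is exactly the claimed update, while $\Theta_0=\vartheta\T d_0=\tfrac{\beta-t}{\mu+1}-\tfrac{\beta+t}{\mu+1}=-\tfrac{2t}{\mu+1}$. Thus the numbers $\Theta_k$ produced by the scalar scheme coincide with those of \cref{thm:MainRes}, so the asymptotic estimate $\Theta_k=\Theta(t)+\Osym(k^{\varepsilon-\mu-2})$ and the characterization that $\lambda$ is an eigenvalue of \eqref{CSWE} if and only if $t=\lambda-\mu(\mu+1)$ is a zero of $\Theta$ are nothing but parts (b) and (c) of that theorem. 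Finally, since $e_2\T d_k=w_k$, the eigenfunction exhibited in \cref{thm:MainRes}(c) is precisely $\bigl(\tfrac{1+x}{1-x}\bigr)^{\mu/2}\sum_k\tfrac{w_k}{2^k}(1+x)^k$; every eigenfunction is a constant multiple of it because, the indicial exponents of \eqref{CSWE} at $x=\pm1$ being $\pm\mu/2$, the solutions that remain bounded on $(-1,1)$ form a one-dimensional subspace.

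No serious difficulty is expected: the corollary is a computational repackaging of \cref{thm:MainRes}. The only place that requires a little care is the common-denominator simplification turning the two separate fractions in \eqref{Recurrence} into the single coefficient of $w_{k-1}$, together with the elementary but essential observation that the first components of the vectors $d_k$ are irrelevant both for the recursion and for $\Theta_k$.
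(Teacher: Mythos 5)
Your proposal is correct and matches the paper's (implicit) derivation: the corollary is obtained exactly by reading off the components of the vector recursion \eqref{Recurrence}, noting that the vanishing first column makes the first entry of $d_k$ irrelevant, and then importing parts (b) and (c) of \cref{thm:MainRes}, with the one-dimensionality of the eigenspace supplied by \cref{lem:RegSol0}. All the componentwise computations (the common-denominator simplification, $\Theta_0=-\tfrac{2t}{\mu+1}$, and $\Theta_k-\Theta_{k-1}=\vartheta\T u_k$) check out.
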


According to \cref{cor:NumCom}, the eigenvalues $\lambda_n$ of \eqref{CSWE} are related to the zeros $t_n$ of the function $\Theta(t) = \Theta(\mu,\beta,\gamma;t)$ by means of a constant shift $\lambda_n=t_n+\mu(\mu+1)$. Thus, for fixed values $\beta,\gamma\in\C$ and $\mu\in\C$ with $\re\mu>0$ or $\mu=0$, we can now define an entire function
\begin{equation*}
\tilde\Theta(\lambda) := \Theta\big(\lambda-\mu(\mu+1)\big),\quad\lambda\in\C
\end{equation*}
such that the zeros $\lambda_n = \lambda_n(\mu,\beta,\gamma)$ of $\tilde\Theta(\lambda)$ are exactly the eigenvalues of the Coulomb spheroidal wave equation \eqref{CSWE}. 

\Cref{fig:Theta-3D} and \cref{fig:Theta-2D} illustrate the functions $\tilde\Theta(\lambda)$ for $\mu=0$ and $\mu=1$ with real parameters $-40\leq\lambda\leq 120$, $-80\leq\gamma^2\leq 80$ in the case $\beta=0$ along with their zero sets. These curves are the eigenvalues of the angular spheroidal wave equation \eqref{ASWE} for the specified parameters. Note that the top views in \cref{fig:Theta-2D} are consistent with the eigenvalue maps for $\lambda_n^0(\gamma^2)$ and $\lambda_n^1(\gamma^2)$ given by Meixner and Schäfke \cite[p. 236, figs. 13 and 14]{MS:1954}.

\begin{figure}[tbhp]
\centering
\subfloat[$\mu=0$]{\includegraphics{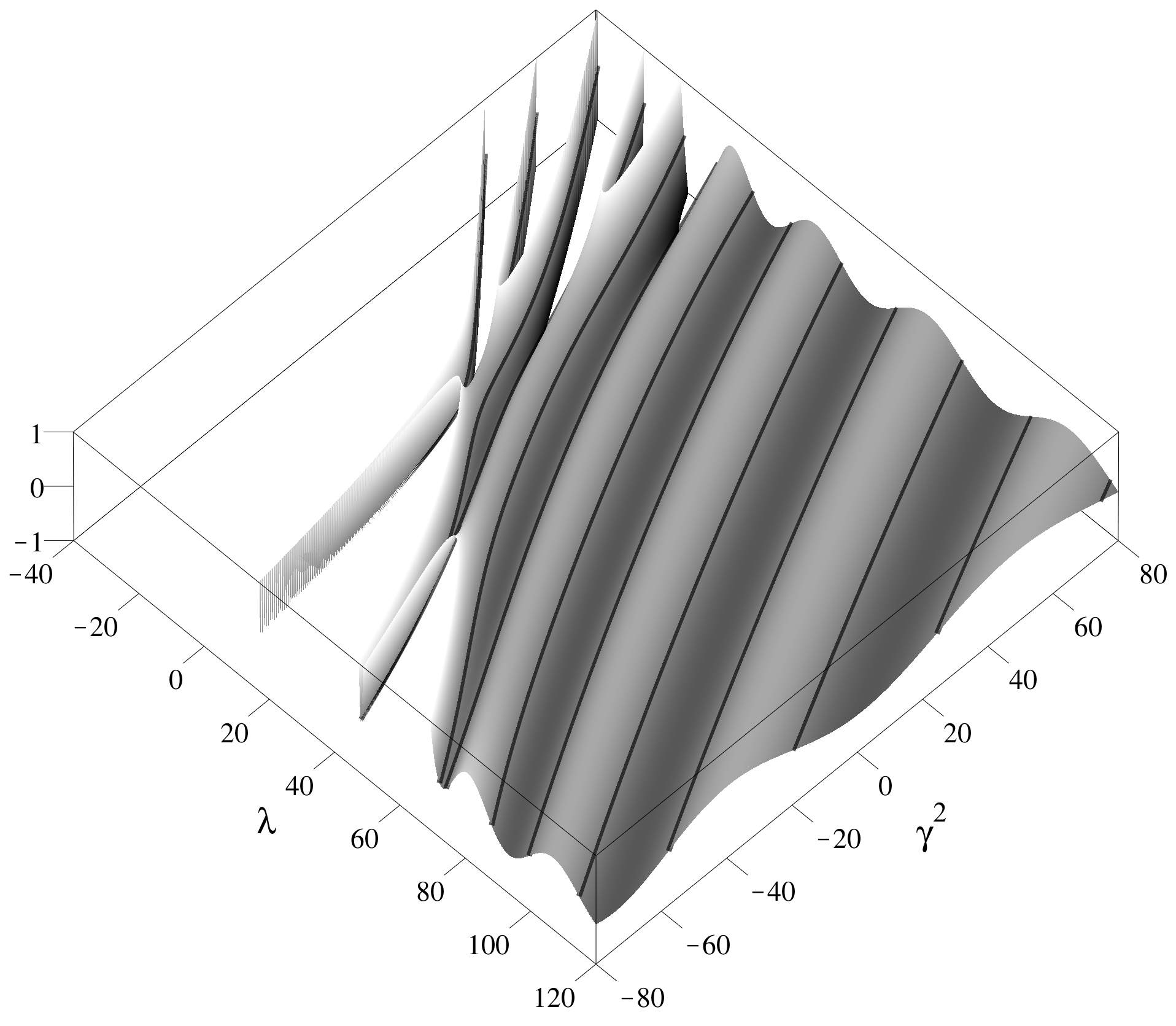}} \\
\subfloat[$\mu=1$]{\includegraphics{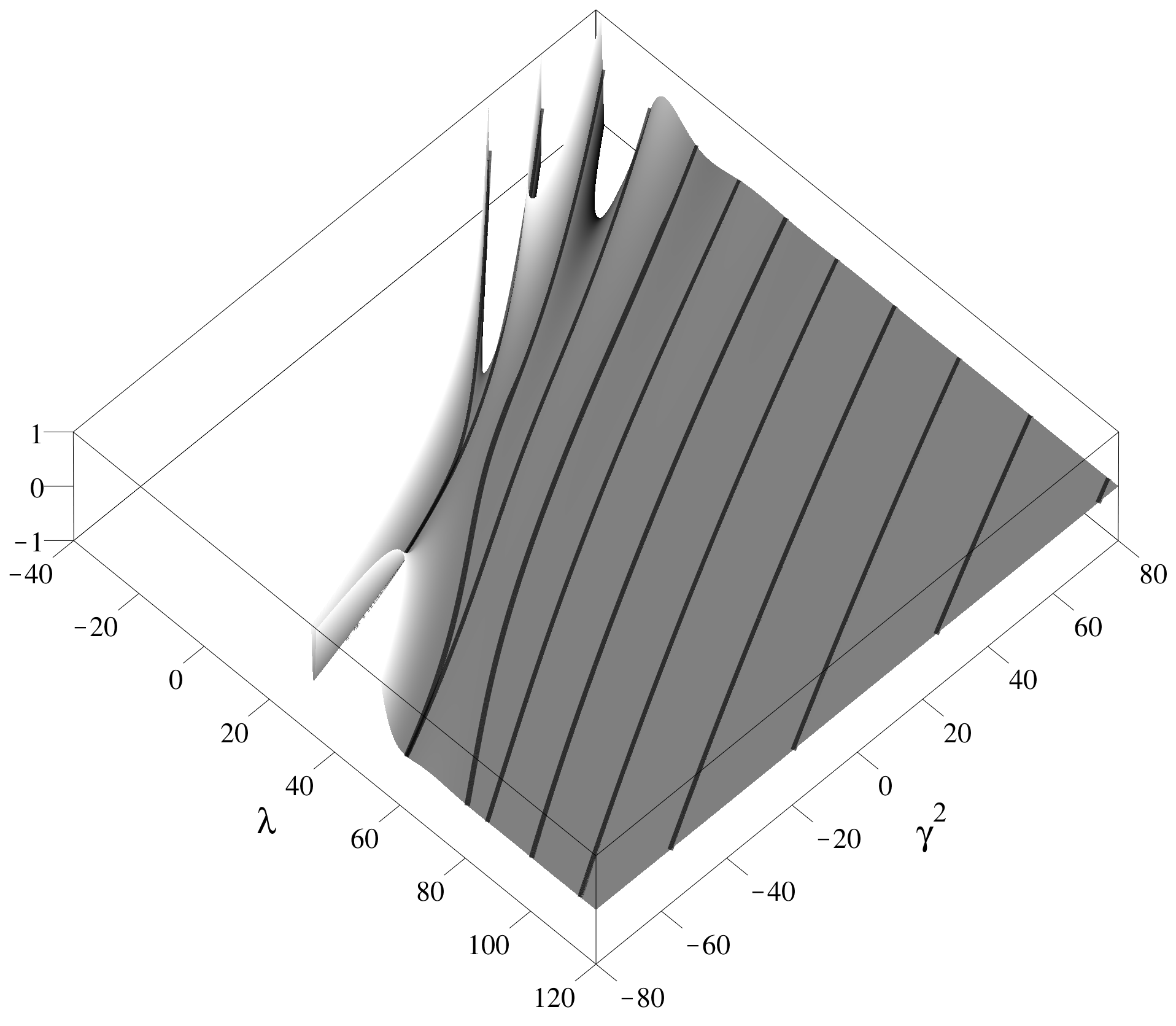}} 
\caption{The functions $\tilde\Theta(\lambda)$ for $\mu=0$ and $\mu=1$ in the case $\beta=0$. The contour lines are the zeros of $\tilde\Theta(\lambda)$, i.e., the eigenvalues of the angular spheroidal wave equation.}
\label{fig:Theta-3D}
\end{figure}

\begin{figure}[tbhp]
\centering
\subfloat[$\mu=0$]{\includegraphics{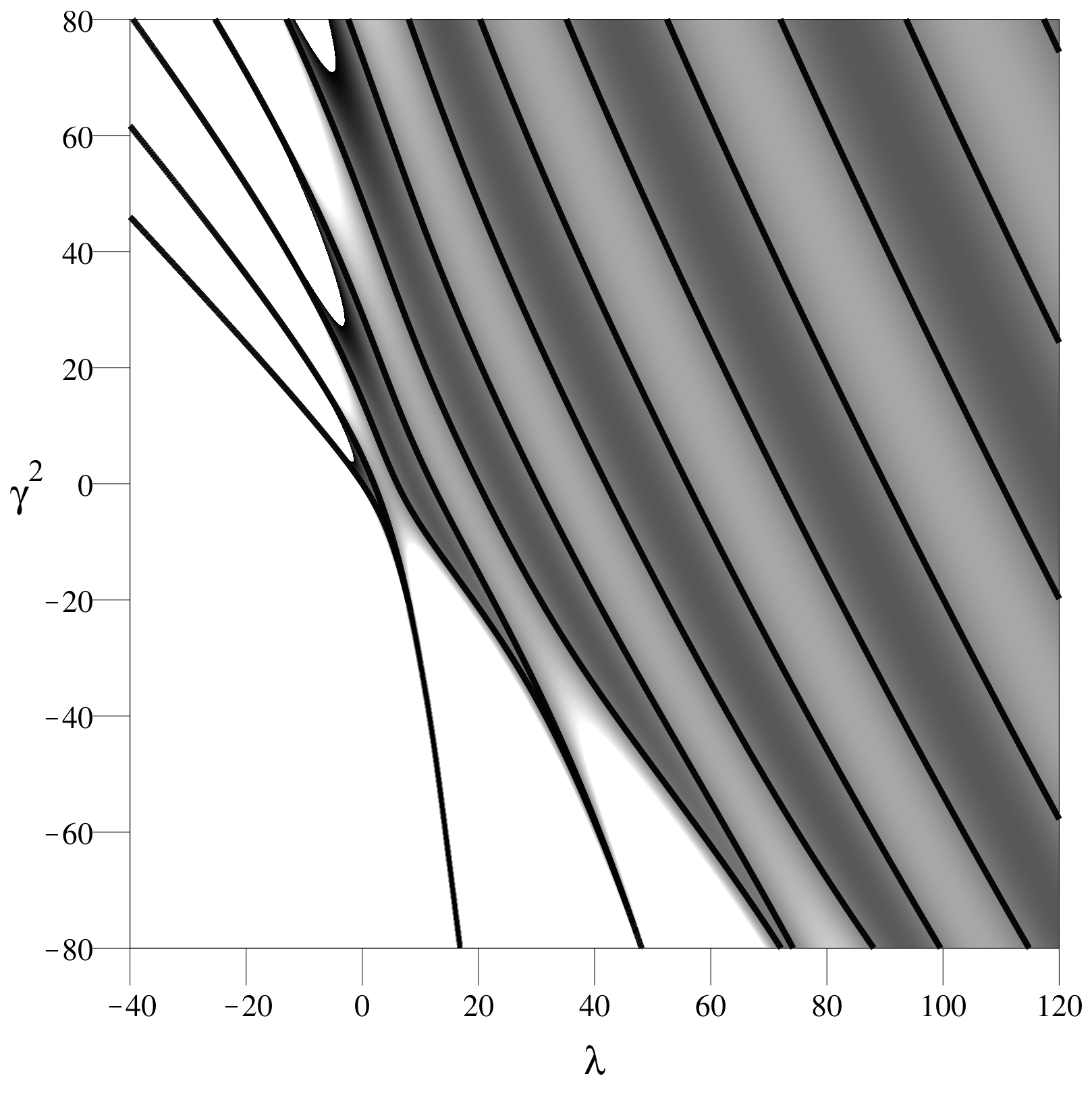}} \qquad 
\subfloat[$\mu=1$]{\includegraphics{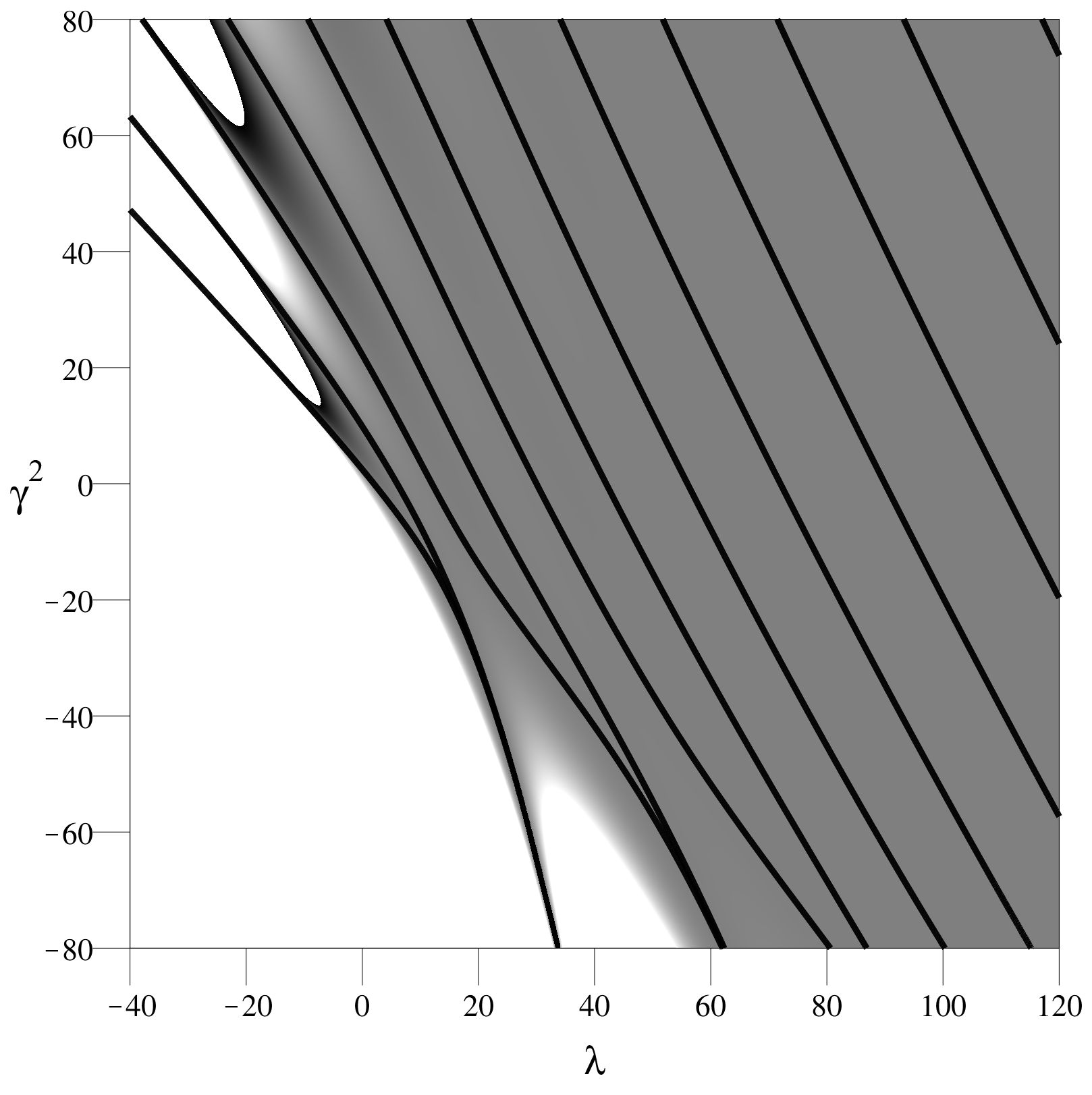}}
\caption{Eigenvalue maps for the the angular spheroidal wave equation with parameter values $\mu=0$ and $\mu=1$, produced by the level curves $\tilde\Theta(\lambda)=0$.}
\label{fig:Theta-2D}
\end{figure}

\Cref{fig:Theta-beta} shows the dependency of the eigenvalues $\lambda$ on the parameter $0\leq\beta\leq 90$ for fixed $\gamma=10\,\I$ (resp. $\gamma^2=-100$) as an example. The eigenvalue curves in this picture do not cross, as can be seen in the enlarged detail on the right. This phenomenon is known as ``avoided crossing''. It should be noted that, like in this example, when computing Coulomb spheroidal eigenvalues for some parameter $\beta$, one may always assume $\re(\beta)\geq 0$ without restriction, due to the following reason: If we substitute $-x$ for $x$ in \eqref{CSWE}, then we get the same CSWE except that $\beta$ is replaced by $-\beta$. Thus, at fixed $\mu$ and $\gamma$, the eigenvalues are identical for $\pm\beta$.

\begin{figure}[tbhp]
\centering
\includegraphics{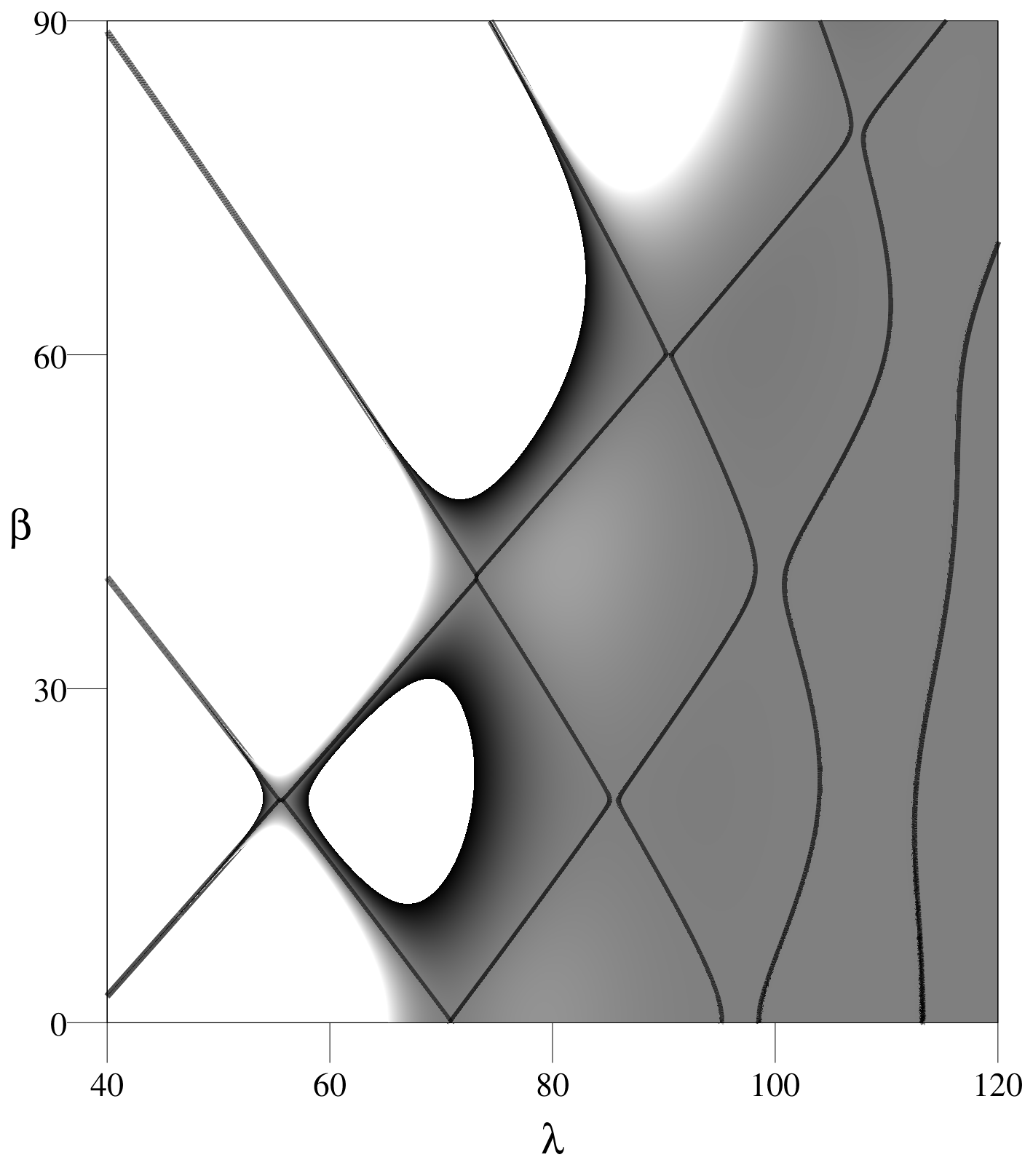}\qquad\includegraphics{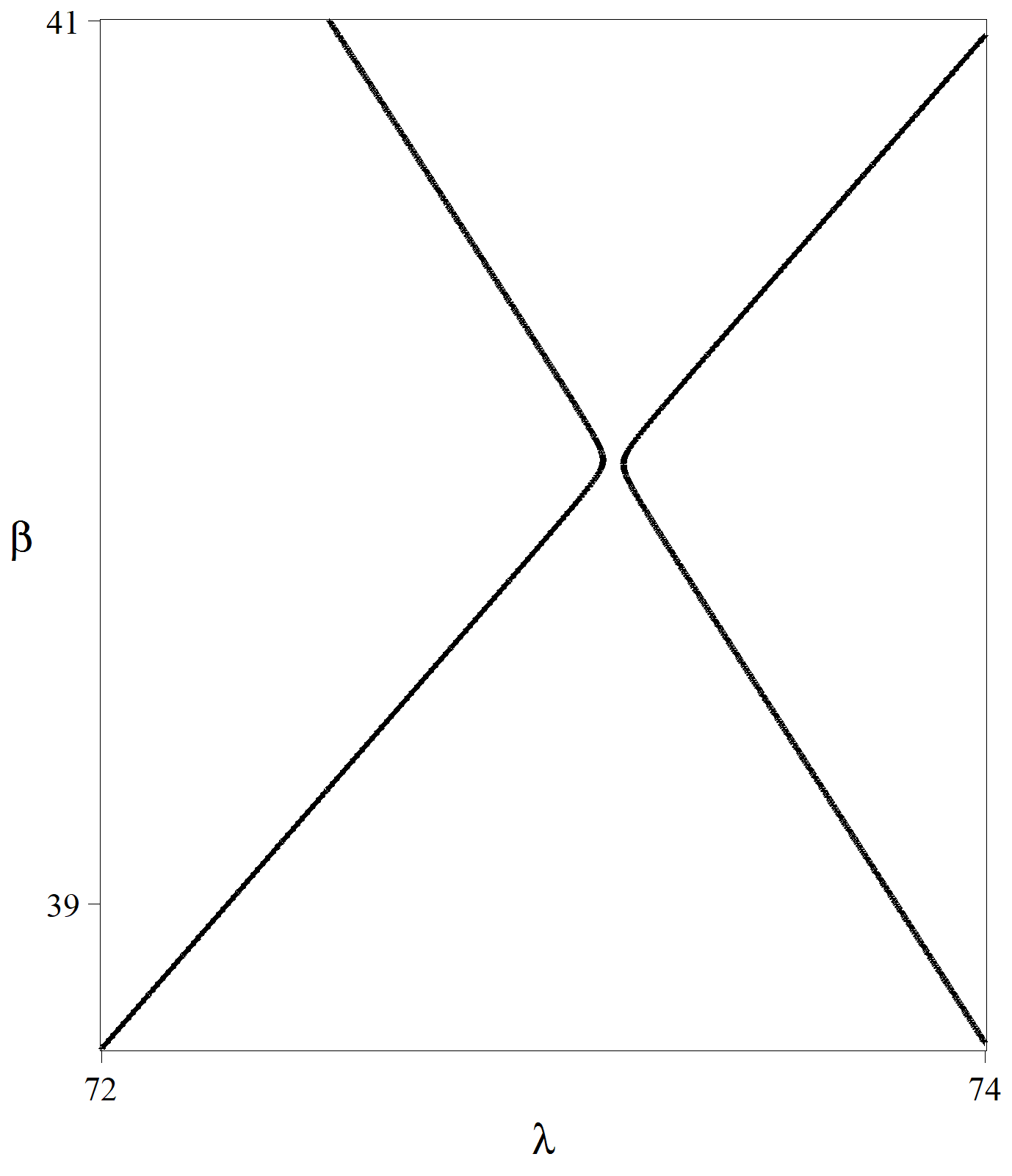}
\caption{The zeros of the function $\tilde\Theta(\lambda)$ for $\mu=1$ and $\gamma^2=-100$ in dependence of $\beta$.}
\label{fig:Theta-beta}
\end{figure}

As an example for an angular spheroidal wave equation ($\beta=0$) with complex parameters, we consider the cases $\mu = 2$ and $\mu=2+0.05\,\I$ for fixed $\gamma=5\,\I$. In \cref{fig:Theta-complex} the zeros of $\re\tilde\Theta(\lambda)$ are plotted as dashed lines and the zeros of $\im\tilde\Theta(\lambda)$ as solid lines. The intersection points of these curves are the zeros of the function $\tilde\Theta(\lambda)$ and hence the complex eigenvalues of the corresponding spheroidal wave equations. A numerical computation of the (complex) zeros provides the eigenvalues in \cref{tab:lambda2n}.

\begin{figure}[tbhp]
\centering
\subfloat[$\mu = 2$]{\includegraphics{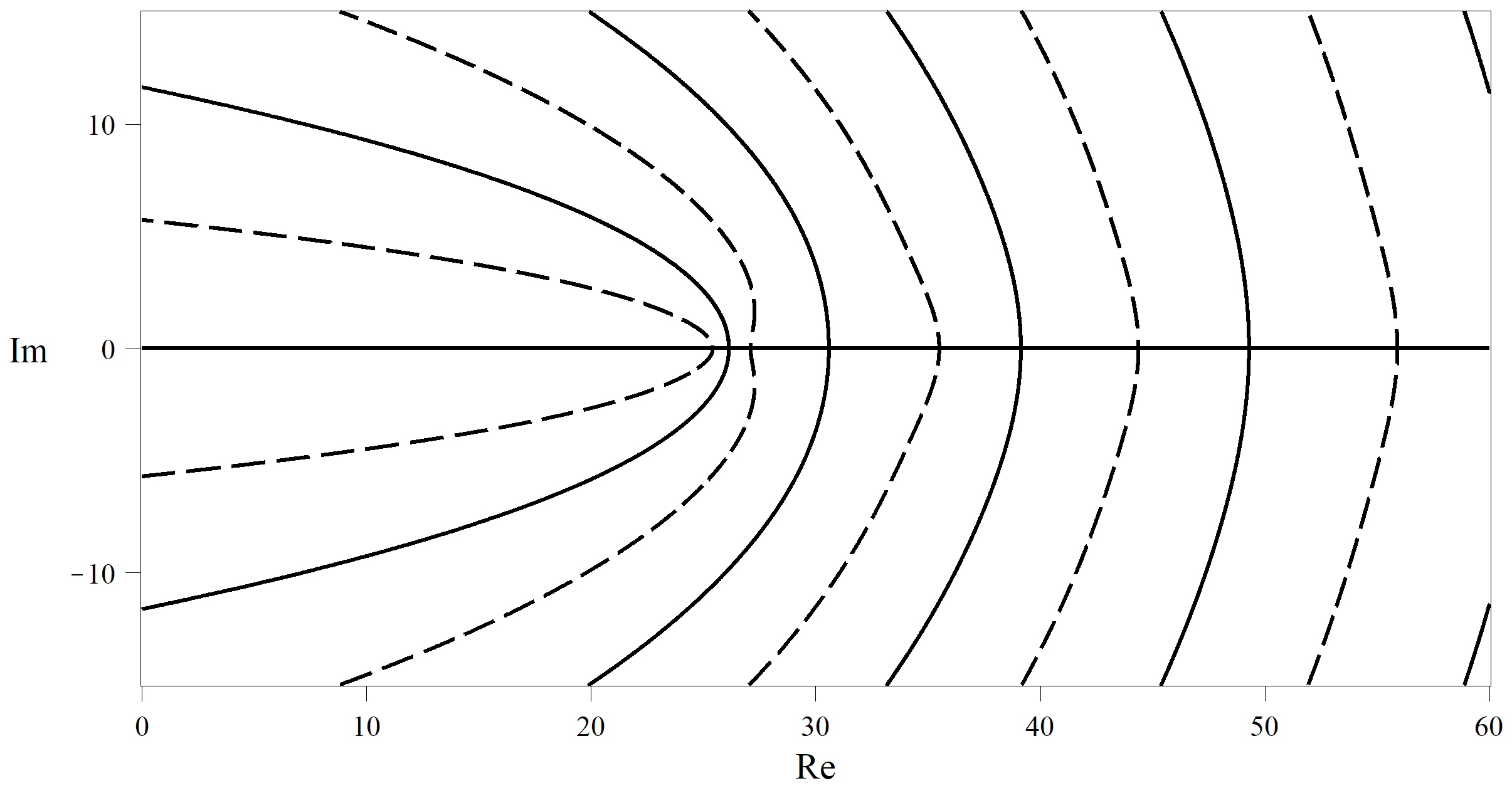}} \\
\subfloat[$\mu=2+0.05\,\I$]{\includegraphics{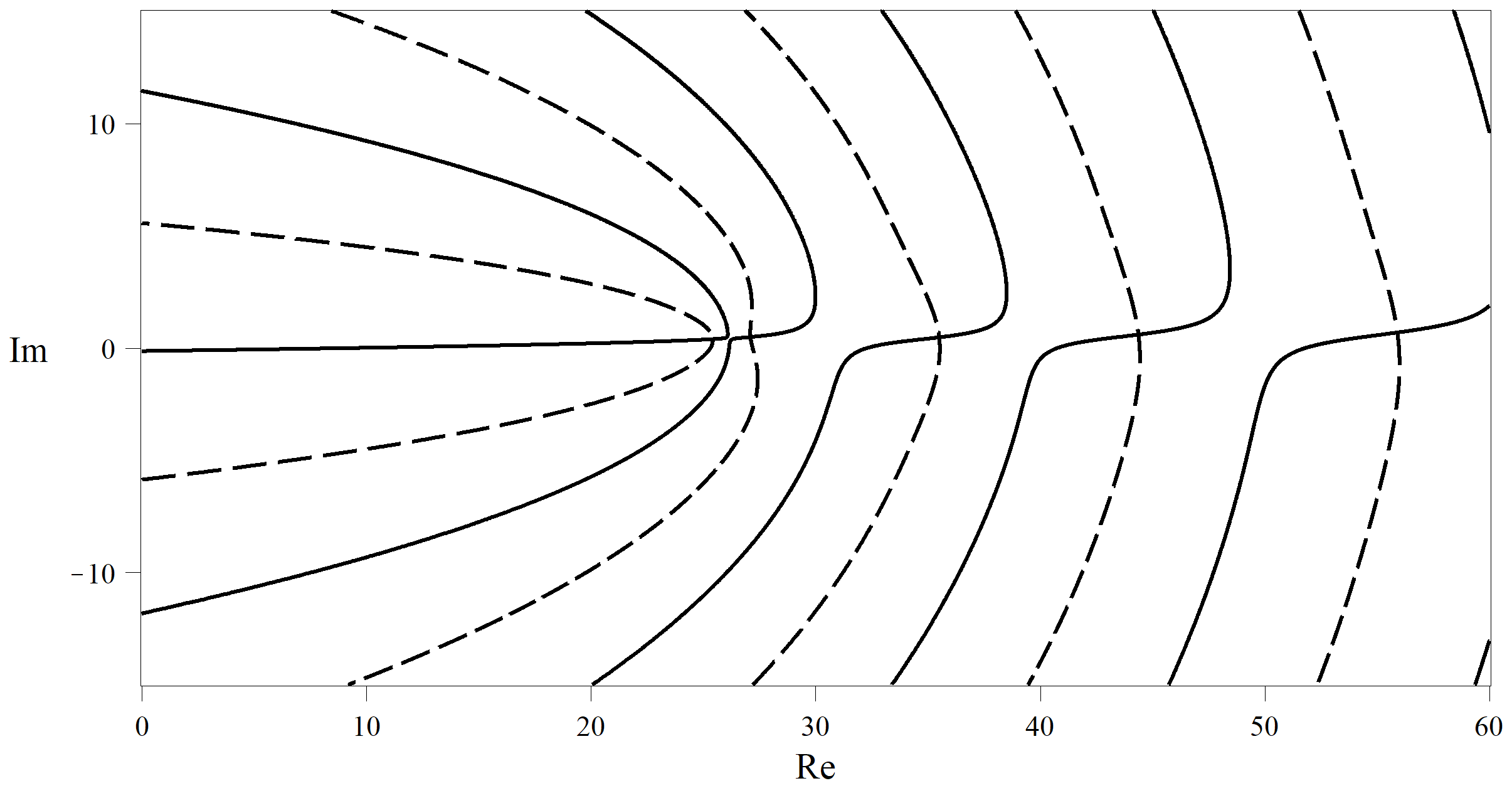}}
\caption{The zeros of the functions $\re\tilde\Theta(\lambda)$ (dashed) and $\im\tilde\Theta(\lambda)$ (solid) for two different parameters $\mu$ and fixed values $\beta=0$, $\gamma^2=-25$ within the range $0\leq\re\lambda\leq 60$, $-15\leq\im\lambda\leq 15$ in the complex $\lambda$-plane.}
\label{fig:Theta-complex}
\end{figure}

\begin{table}[tbhp]\medskip
\begin{tabular}{c|c}
 $\mu = 2$       & $\mu = 2+0.05\,\I$ \\ \hline & \\[-2ex]
 $25.4289571085$ & $25.4290583061+0.3844748370\,\I$ \\
 $27.1098058160$ & $27.1087295464+0.4786514091\,\I$ \\
 $35.5123673338$ & $35.5086680718+0.4658209197\,\I$ \\
 $44.3843905254$ & $44.3817462437+0.5879425852\,\I$ \\
 $55.9101722480$ & $55.9074629810+0.6802438797\,\I$
\end{tabular}
\caption{The eigenvalues of the angular spheroidal wave equation \eqref{ASWE} for $\gamma^2=-25$ and two sample values $\mu$ within the range $0\leq\re\lambda\leq 60$, $-15\leq\im\lambda\leq 15$.}\label{tab:lambda2n}
\end{table}

Finally, let us compare the numerical results for the integer case $\mu=2$ with the values listed in some publications, cf. \cref{tab:Lambda2n}. Unfortunately, there is no commonly accepted standard form for the angular spheroidal wave equation. In the present paper we follow the notation \eqref{ASWE} established by J. Meixner and F. W. Schäfke \cite[Chapter 3]{MS:1954}, which is well suited for the general case of complex parameters; it is also used in \cite{FAW:2003}, \cite{NIST:2010} or \cite{KPS:1976}, for instance. Another frequently encountered notation is that of Flammer \cite{Flammer:1957}:
\begin{equation} \label{FSWE}
\D{}{x}\left((1-x^2)\D{}{x}w(x)\right) + \left(\Lambda - \gamma^2 x^2 - \frac{\mu^2}{1-x^2}\right)w(x) = 0,\quad -1<x<1
\end{equation}
which is applied e.g. in \cite{Falloon:2001}, \cite{Skoro:2015} or \cite{AS:1972}. Moreover, a lot of numerical tables, such as \cite{SL:1964} or \cite{TSWF:1975}, refer to the form \eqref{FSWE}. 
Obviously, the eigenvalues of \eqref{ASWE} and \eqref{FSWE} are simply related by $\lambda = \Lambda-\gamma^2$.

\begin{table}[tbhp]\medskip
\begin{tabular}{c|c|c}
 Flammer &  Zhang \& Jin & Stuckey \& Layton \\
\cite[Tables 131, 132]{Flammer:1957} & \cite[Table 15.15]{ZJ:1996} & \cite[Table 13, $c=5.0$]{SL:1964}  \\ \hline & \\[-2ex]
 $0.42896$ & $0.42895711$ & $0.42895710850$ \\
 $2.10982$ & $2.1098058$  & $21.098058160$  \\
 $10.512$  & $10.512367$  & $10.512367333$  \\
 $19.384$  & $19.384391$  & $19.384390525$  \\
 $30.911$  & $30.910172$  & $30.910172248$
\end{tabular}
\caption{The five lowest spheroidal eigenvalues $\Lambda$ of the angular spheroidal wave equation \eqref{FSWE} in the oblate case $\gamma^2=-25$ for $\mu=2$ taken from different numerical tables. By means of $\lambda=\Lambda-\gamma^2=\Lambda+25$, we obtain the eigenvalues $\lambda$ of \eqref{ASWE} listed in \cref{tab:lambda2n}.}\label{tab:Lambda2n}
\end{table}

It should be noted that the Coulomb spheroidal wave equation can also be written in a slightly different form. By means of the transformation $w(x) = (1-x^2)^{\mu/2}\psi(x)$, \eqref{CSWE} is equivalent to
\begin{equation} \label{ChuStr}
(1-x^2)\psi''(x) - 2(\mu+1)x\,\psi'(x) + \left(t + \beta x + \gamma^2(1-x^2)\right)\psi(x) = 0
\end{equation}
for $-1<x<1$, where $t = \lambda-\mu(\mu+1)$ appears as the eigenvalue parameter. This ODE has a nontrivial bounded solution if and only if $\lambda=t+\mu(\mu+1)$ is an eigenvalue of \eqref{GSWE}. Hence, the eigenvalues of \eqref{GSWE} are exactly the zeros of the function $\Theta(t)$ defined in \cref{thm:MainRes} or \cref{cor:NumCom}. The differential equation \eqref{ChuStr} is a generalization of the angular spheroidal wave equation, written in a notation that goes back to Chu and Stratton \cite[Section 1]{CS:1941}.

For further numerical computations it may be useful to examine the asymptotic behavior of the function $\Theta(t)$ for large $|t|$. Here we will consider only the special case $\beta=\gamma=0$ and $\mu\in\R$.

\begin{Lemma}
If $\beta=\gamma=0$ and $\mu\in\R$, $\mu\geq 0$, then we obtain for $t\in\R$ the asymptotic behavior
\begin{equation*}
\Theta(t) \sim \left\{\begin{array}{ll}
\frac{\Gamma(\mu+1)^2}{\pi}\,t^{-\mu}\cos\big((\sqrt{t}-\mu)\pi\big), & \quad t\to+\infty \\[1ex]
\frac{\Gamma(\mu+1)^2}{2\pi}\,|t|^{-\mu}\E^{\pi\sqrt{|t|}}, & \quad t\to-\infty
\end{array}\right.
\end{equation*}
\end{Lemma}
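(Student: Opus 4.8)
The plan is to bypass the recurrence entirely and to start from the closed form for $\Theta(t)$ in the case $\beta=\gamma=0$ supplied by part~(d) of \cref{thm:MainRes}, namely $\Theta(t)=\Gamma(\mu+1)^2\big/\big(\Gamma(\mu+\tfrac12-\tau)\Gamma(\mu+\tfrac12+\tau)\big)$ with $\tau=\sqrt{t+(\mu+\tfrac12)^2}$ and $-\tfrac{\pi}{2}<\arg\tau\le\tfrac{\pi}{2}$. Both asymptotic regimes then reduce to feeding Stirling‑type estimates for the Gamma function into this formula and re‑expressing everything in terms of $t$.

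\emph{The case $t\to+\infty$.} Here $\tau$ is real, positive and tends to $+\infty$, so I would use the equivalent form $\Theta(t)=\cos\big((\tau-\mu)\pi\big)\Gamma(\mu+1)^2\Gamma(\tau+\tfrac12-\mu)\big/\big(\pi\Gamma(\tau+\tfrac12+\mu)\big)$ also recorded in~(d). The ratio of Gamma functions is controlled by the classical estimate $\Gamma(z+a)/\Gamma(z+b)=z^{a-b}\big(1+\Osym(1/z)\big)$ as $z\to\infty$, which with $z=\tau$, $a=\tfrac12-\mu$, $b=\tfrac12+\mu$ gives $\Gamma(\tau+\tfrac12-\mu)/\Gamma(\tau+\tfrac12+\mu)=\tau^{-2\mu}\big(1+\Osym(1/\tau)\big)$; since $\tau^2=t+(\mu+\tfrac12)^2$ this equals $t^{-\mu}\big(1+\Osym(1/t)\big)$. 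It remains to compare $\cos\big((\tau-\mu)\pi\big)$ with $\cos\big((\sqrt{t}-\mu)\pi\big)$: from $\tau=\sqrt{t}+\tfrac{(\mu+1/2)^2}{2\sqrt{t}}+\Osym(t^{-3/2})$ one gets $(\tau-\mu)\pi=(\sqrt{t}-\mu)\pi+\Osym(1/\sqrt{t})$, and the addition theorem for the cosine (together with boundedness of $\sin$ and $\cos$) yields $\cos\big((\tau-\mu)\pi\big)=\cos\big((\sqrt{t}-\mu)\pi\big)+\Osym(1/\sqrt{t})$. Collecting these estimates gives $\Theta(t)=\tfrac{\Gamma(\mu+1)^2}{\pi}\,t^{-\mu}\big(\cos((\sqrt{t}-\mu)\pi)+\Osym(1/\sqrt{t})\big)$, which is the asserted asymptotics, the symbol $\sim$ being understood in the sense that the remainder is of smaller order than the amplitude $t^{-\mu}$.

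\emph{The case $t\to-\infty$.} Once $|t|>(\mu+\tfrac12)^2$ the radicand $t+(\mu+\tfrac12)^2$ is a negative real number, and the branch condition $\arg\tau\in(-\tfrac{\pi}{2},\tfrac{\pi}{2}]$ forces $\tau=\I\sigma$ with $\sigma:=\sqrt{|t|-(\mu+\tfrac12)^2}>0$. Since $\mu$ is real, $\Gamma(\mu+\tfrac12-\I\sigma)=\overline{\Gamma(\mu+\tfrac12+\I\sigma)}$, so the denominator of $\Theta(t)$ is exactly $|\Gamma(\mu+\tfrac12+\I\sigma)|^2$. Now I would invoke the classical asymptotic $|\Gamma(x+\I y)|^2=2\pi|y|^{2x-1}\E^{-\pi|y|}\big(1+\Osym(1/|y|)\big)$ as $|y|\to\infty$ for fixed real $x$ (again a consequence of Stirling's formula), which with $x=\mu+\tfrac12$, $y=\sigma$ gives $|\Gamma(\mu+\tfrac12+\I\sigma)|^2=2\pi\sigma^{2\mu}\E^{-\pi\sigma}\big(1+\Osym(1/\sigma)\big)$ and hence $\Theta(t)=\tfrac{\Gamma(\mu+1)^2}{2\pi}\,\sigma^{-2\mu}\E^{\pi\sigma}\big(1+\Osym(1/\sigma)\big)$. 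Finally $\sigma^2=|t|-(\mu+\tfrac12)^2$ gives $\sigma^{-2\mu}=|t|^{-\mu}\big(1+\Osym(1/|t|)\big)$, while $\sigma-\sqrt{|t|}=\Osym(1/\sqrt{|t|})\to0$ gives $\E^{\pi\sigma}=\E^{\pi\sqrt{|t|}}\big(1+o(1)\big)$; combining these proves $\Theta(t)\sim\tfrac{\Gamma(\mu+1)^2}{2\pi}\,|t|^{-\mu}\E^{\pi\sqrt{|t|}}$, and here no oscillatory factor is present, so $\sim$ is meant literally.

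The only mildly delicate point is the bookkeeping in the first case: the envelope $\cos\big((\sqrt{t}-\mu)\pi\big)$ has infinitely many zeros, so a literal reading of $\sim$ would be inappropriate there; what the argument actually produces, and what the statement should be taken to mean, is $\Theta(t)-\tfrac{\Gamma(\mu+1)^2}{\pi}t^{-\mu}\cos\big((\sqrt{t}-\mu)\pi\big)=\Osym(t^{-\mu-1/2})$. Everything else is routine manipulation of Stirling‑type estimates, and none of it needs the representation of $\Theta$ via the recurrence \eqref{Recurrence}.
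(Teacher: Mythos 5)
Your proposal is correct and follows essentially the same route as the paper: both start from the closed form in part (d) of the main theorem, apply the Gamma-ratio asymptotic $\Gamma(z+a)/\Gamma(z+b)\sim z^{a-b}$ for $t\to+\infty$, and use the modulus asymptotic of $\Gamma$ on vertical lines (the paper cites \cite[5.11.9]{NIST:2010}) for $t\to-\infty$, where $\tau$ becomes purely imaginary. Your additional remarks --- the explicit estimate $\cos((\tau-\mu)\pi)=\cos((\sqrt{t}-\mu)\pi)+\Osym(t^{-1/2})$ and the clarification of what $\sim$ must mean near the zeros of the cosine envelope --- are refinements the paper leaves implicit, but they do not change the argument.
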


\begin{proof}
From \cite[Sec. 1.1]{MOS:1966} it follows that $\frac{\Gamma(z+a)}{\Gamma(z+b)}\sim z^{a-b}\left(1+\Osym(\tfrac{1}{z})\right)$ for $z\to\infty$, $|\arg z|<\pi$, and \cref{thm:MainRes}, (d) yields
\begin{equation*}
\Theta(t) 
= \cos\left((\tau-\mu)\pi\right)\frac{\Gamma(\mu+1)^2}{\pi}\frac{\Gamma(\tau+\frac{1}{2}-\mu)}{\Gamma(\tau+\frac{1}{2}+\mu)} \sim \frac{\Gamma(\mu+1)^2}{\pi}\,\tau^{-2\mu}\cos\left((\tau-\mu)\pi\right)
\end{equation*}
For large real numbers $t\to+\infty$ also $\tau=\sqrt{t+(\mu+\tfrac{1}{2})^2}=\sqrt{t}+\Osym(t^{-1/2})$ is real, and hence
\begin{equation*}
\Theta(t)\sim\frac{\Gamma(\mu+1)^2}{\pi}\,t^{-\mu}\cos\big((\sqrt{t}-\mu)\pi\big)\quad\mbox{for real}\quad t\to+\infty
\end{equation*}
Now let us study the asymptotic behavior for $t\to-\infty$. In this case $\tau$ becomes a purely imaginary number with $\im(\tau) > 0$. According to \cite[5.11.9]{NIST:2010}, we have $|\Gamma(x+\I\,y)|\sim\sqrt{2\pi}\,|y|^{x-(1/2)}\E^{-\pi|y|/2}$ for real $x$ and $y\to\pm\infty$. Therefore,
\begin{equation*}
\Theta(t) 
= \frac{\Gamma(\mu+1)^2}{\Gamma(\mu+\frac{1}{2}+\tau)\Gamma(\mu+\frac{1}{2}-\tau)}
\sim \frac{\Gamma(\mu+1)^2}{2\pi|\tau|^{2\mu}\E^{-\pi|\tau|}}
\end{equation*}
and by means of $|\tau|=\sqrt{|t|}+\Osym(|t|^{-1/2})$, we obtain
\begin{equation*}
\Theta(t) \sim \frac{\Gamma(\mu+1)^2}{2\pi}\,|t|^{-\mu}\E^{\pi\sqrt{|t|}}\quad\mbox{for real}\quad t\to-\infty
\end{equation*}
\end{proof}

\section{Proof of the main theorem}
\label{sec:Proof}

In this section we study the CSWE \eqref{CSWE}. We assume that $\mu,\beta,\gamma\in\C$ are fixed numbers, whereas $\lambda\in\C$ is considered to be the eigenvalue parameter. Initially, we will associate a first order $2\times 2$ system to the second order ODE \eqref{CSWE}. To this end, we introduce the function
\begin{equation} \label{DefV}
v(x) := 2w'(x) + \frac{2\mu x}{1-x^2}\,w(x)
\end{equation}
If $w(x)$ is a solution of \eqref{CSWE}, then
\begin{align*}
& \D{}{x}\left((1-x^2)v(x)\right) 
  = 2\,\D{}{x}\left((1-x^2)\D{}{x}w(x)\right) + 2\mu w(x) + 2\mu x w'(x) \\
& = -2\left(\lambda + \beta x + \gamma^2(1-x^2) - \frac{\mu^2}{1-x^2}\right)w(x) + 2\mu w(x)
    + \mu x\left(v(x)-\frac{2\mu x}{1-x^2}\,w(x)\right)  \\
& = \mu x v(x) - 2\left(\lambda - \mu(\mu+1) + \beta x + \gamma^2(1-x^2)\right)w(x)
\end{align*}
Solving this equation for $v'(x)$ and \eqref{DefV} for $w'(x)$, we get the differential system
\begin{alignat*}{3}
v'(x) & {}={} & -\frac{(\mu+2)x}{x^2-1}\,v(x) & +\frac{2(t + \beta x) - 2\gamma^2(x^2-1)}{x^2-1}\,w(x) \\
w'(x) & {}={} & \frac{1}{2}\,v(x) & +\frac{\mu x}{x^2-1}\,w(x)
\end{alignat*}
where $t := \lambda-\mu(\mu+1)$. Moreover, by means of the transformation $x = 2z-1$, the vector function
\begin{equation*}
y(z) = \begin{pmatrix} v(2z-1) \\[1ex] w(2z-1) \end{pmatrix}
\end{equation*}
is a solution of the $2\times 2$ system
\begin{equation} \label{RegSing}
y'(z) = \left(\frac{1}{z}\begin{pmatrix} -\frac{\mu}{2}-1 & \beta-t \\[1ex] 0 & \frac{\mu}{2} \end{pmatrix} + \frac{1}{z-1}\begin{pmatrix} -\frac{\mu}{2}-1 & \beta+t \\[1ex] 0 & \frac{\mu}{2} \end{pmatrix} + \begin{pmatrix} 0 & -4\gamma^2 \\[1ex] 1 & 0 \end{pmatrix}\right)y(z)
\end{equation}
This is a meromorphic differential system in the complex plane with regular singular points at $z=0$ and $z=1$, and an irregular singularity at infinity. Conversely, if $y(z)$ is a solution of \eqref{RegSing}, then its first component satisfies \eqref{DefV} for $x=2z-1$, and we can derive the CSWE \eqref{CSWE} for its second component $w(x)$. The next step is to find appropriate fundamental matrices to \eqref{RegSing}, and for this purpose we use

\begin{Lemma} \label{lem:StructFM}
Suppose that $\mu\in\C\setminus\{-1,-2,-3,\ldots\}$, and let $R:\mathfrak{D}\longrightarrow\MzC$ be a holomorphic matrix function on the open disk $\mathfrak{D} := \{z\in\C:|z|<\varepsilon\}$ for some $\varepsilon>0$. Then the differential system
\begin{equation} \label{RegSingA}
y'(z) = \left(\frac{1}{z}\begin{pmatrix} -\frac{\mu}{2}-1 & \sigma \\[1ex] 0 & \frac{\mu}{2} \end{pmatrix} + R(z)\right)y(z),\quad 
R(0) = \begin{pmatrix} \ast & \ast \\[1ex] \rho & \ast \end{pmatrix}
\end{equation}
with arbitrary $\sigma,\rho\in\C$ has a fundamental matrix of the form
\begin{equation} \label{FundMatA}
Y(z) = \begin{pmatrix} \frac{1}{z} & 0 \\[1ex] 0 & 1 \end{pmatrix}H(z)
\begin{pmatrix} z^{-\mu/2} & 0 \\[1ex] 0 & z^{\mu/2} \end{pmatrix}
\begin{pmatrix} 1 & 0 \\[1ex] q\log z & 1 \end{pmatrix},\quad
H(0) = \begin{pmatrix} 1 & 0 \\[1ex] p & 1 \end{pmatrix}
\end{equation}
where $H:\mathfrak{D}\longrightarrow\MzC$ is a holomorphic matrix function and $p$, $q$ are some complex numbers. If $\mu$ is not an integer, then $q=0$, and if $\mu\neq 0$, then $p=-\frac{\rho}{\mu}$; in case of $\mu=0$ we get $p=0$, $q=\rho$.
\end{Lemma}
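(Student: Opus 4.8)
The plan is to reduce \eqref{RegSingA} by an explicit diagonal gauge transformation to a system whose leading matrix at $z=0$ is lower triangular, and then to construct a fundamental matrix one column at a time by the Frobenius method, reading off $p$ and $q$ from the recursions. First I would set $y(z)=\operatorname{diag}(z^{-1},1)\,\tilde y(z)$; a routine computation — using only that the $(2,1)$-entry of $R(0)$ equals $\rho$, so that the $(2,1)$-entry of $z^{-1}R(z)$ has singular part $\rho/z$ — turns \eqref{RegSingA} into
\begin{equation*}
\tilde y'(z)=\left(\frac1z\begin{pmatrix}-\tfrac\mu2 & 0\\ \rho & \tfrac\mu2\end{pmatrix}+\tilde R(z)\right)\tilde y(z)
\end{equation*}
with $\tilde R:\mathfrak{D}\longrightarrow\MzC$ holomorphic, and, since fundamental matrices transform by $\tilde Y=\operatorname{diag}(z,1)\,Y$, the assertion \eqref{FundMatA} becomes the claim that this reduced system has a fundamental matrix
\begin{equation*}
\tilde Y(z)=H(z)\begin{pmatrix}z^{-\mu/2}&0\\ 0&z^{\mu/2}\end{pmatrix}\begin{pmatrix}1&0\\ q\log z&1\end{pmatrix},\qquad H(0)=\begin{pmatrix}1&0\\ p&1\end{pmatrix},
\end{equation*}
with $H$ holomorphic. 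Its two columns are $z^{\mu/2}h_2(z)$ and $z^{-\mu/2}h_1(z)+q\,z^{\mu/2}(\log z)\,h_2(z)$, where $h_1,h_2$ are the columns of $H$; the lower triangular unipotent value $H(0)$ is not imposed but will drop out of the normalizations below.

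For the solution $z^{\mu/2}h_2(z)$, write $h_2=\sum_{k\ge 0}c_k z^k$ and $\tilde R(z)=\sum_{m\ge 0}\tilde R_m z^m$; inserting this into the reduced system and comparing powers of $z$ yields
\begin{equation*}
\begin{pmatrix}k+\mu & 0\\ -\rho & k\end{pmatrix}c_k=\sum_{j=0}^{k-1}\tilde R_{k-1-j}\,c_j,\qquad k\ge 0 .
\end{equation*}
Because $\mu\notin\{-1,-2,-3,\dots\}$, the matrix on the left has nonzero determinant $k(k+\mu)$ for every $k\ge 1$, so $h_2$ is uniquely determined once $c_0$ is fixed; the $k=0$ equation forces $c_0$ to be a multiple of $e_2$, and we take $c_0=e_2$. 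This is the one place where the hypothesis on $\mu$ is genuinely needed — it is exactly what fails for a negative integer $\mu$. For the other solution I would insert the full ansatz $z^{-\mu/2}h_1(z)+q\,z^{\mu/2}(\log z)\,h_2(z)$; the terms carrying $\log z$ cancel identically precisely because $h_2$ solves the equation just derived, and what remains leads, with $h_1=\sum_{k\ge 0}b_k z^k$, to
\begin{equation*}
\begin{pmatrix}k & 0\\ -\rho & k-\mu\end{pmatrix}b_k=-q\,c_{k-\mu}+\sum_{j=0}^{k-1}\tilde R_{k-1-j}\,b_j,\qquad k\ge 0,
\end{equation*}
where $c_{k-\mu}$ is read as $0$ whenever $k<\mu$.

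The three cases of the statement now fall out of this last recursion, and handling them is the main point of the argument. If $\mu$ is not a non-negative integer, then $z^{\mu}h_2(z)$ is not holomorphic at $0$ while every other term is, so $q=0$; the determinant $k(k-\mu)$ is then nonzero for all $k\ge 1$, the recursion is solvable, and the $k=0$ equation gives $b_0=(1,-\rho/\mu)\T$, i.e.\ $p=-\rho/\mu$. If $\mu$ is a positive integer, the left-hand matrix is singular exactly at $k=\mu$, where its single scalar compatibility condition determines $q$ uniquely (and in general $q\ne 0$); then $b_\mu$ exists — its $e_2$-component is free and we set it to $0$ — and all later $b_k$ follow, while $k=0$ again yields $p=-\rho/\mu$. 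If $\mu=0$ the two exponents coincide and the $k=0$ equation reads $\bigl(\begin{smallmatrix}0&0\\ -\rho&0\end{smallmatrix}\bigr)b_0=-q\,e_2$, which forces $q=\rho$ and, after normalizing, $b_0=e_1$, so $p=0$; the determinant $k^2$ is nonzero for $k\ge 1$. Convergence of the series for $h_1$ and $h_2$ on $\mathfrak{D}$ follows from the standard majorant estimates, using that the inverses of the coefficient matrices are $\Osym(1/k)$. Finally I would reassemble $\tilde Y$ in the product form above, observe that $\det\tilde Y=\det H$ tends to $\det H(0)=1$ as $z\to 0$ so that $\tilde Y$ is genuinely a fundamental matrix, and undo the gauge transformation to obtain \eqref{FundMatA} with the stated $p$ and $q$. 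I expect the resonant cases $\mu\in\N$ and $\mu=0$ — checking that the compatibility condition really pins down $q$, and that $H(0)$ comes out lower triangular unipotent — to be the only delicate part.
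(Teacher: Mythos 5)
Your proof is correct, but it takes a genuinely different route from the paper's. The paper first applies the shearing transformation $y(z)=z^{-\mu/2}\bigl(\begin{smallmatrix}1/z & 0\\ -\rho/\mu & 1\end{smallmatrix}\bigr)y_0(z)$ to bring the leading matrix to $\operatorname{diag}(0,\mu)$ and then invokes Wasow's Theorem 5.5; in the non-integer case this finishes immediately, while for $\mu=m\in\N$ the paper runs a chain of $m$ further shearing transformations to collapse the exponent difference to zero, applies Wasow's theorem to the resulting nilpotent leading block, and then carefully tracks the lower-triangularity of the first $m$ Taylor coefficients of the resulting $F(z)$ to reassemble \eqref{FundMatA}; the case $\mu=0$ is treated separately again via Wasow. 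You instead make a single diagonal gauge $y=\operatorname{diag}(z^{-1},1)\tilde y$ and build the fundamental matrix column by column with the Frobenius ansatz, so that the hypotheses and conclusions become visible in the explicit $2\times 2$ recursions: the condition $\mu\notin\{-1,-2,\dots\}$ is exactly the invertibility of $\bigl(\begin{smallmatrix}k+\mu&0\\ -\rho&k\end{smallmatrix}\bigr)$, the values $p=-\rho/\mu$ (resp.\ $p=0$) drop out of the $k=0$ kernel computation, and $q$ is pinned down by the scalar compatibility condition at the resonant index $k=\mu$ (resp.\ $k=0$ when $\mu=0$). I checked the resonant step: at $k=m$ the first row determines $(b_m)_1$ independently of $q$, the second row then fixes $q=v_2+\rho v_1/m$, and $(b_m)_2$ is the expected free normalization, so the argument goes through. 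Your route is more elementary and self-contained (no appeal to Wasow, and all three cases are handled by one uniform recursion), at the cost of having to justify convergence by majorants and linear independence via $\det H(0)=1$ yourself; the paper's route outsources exactly those points to the cited structural theorem but pays for it with the multi-step shearing bookkeeping in the integer case. One cosmetic remark: your claim that the $k=0$ equation for $h_2$ \emph{forces} $c_0\parallel e_2$ fails only in the degenerate subcase $\mu=0$, $\rho=0$, where $c_0=e_2$ is still an admissible (and the intended) choice, so nothing breaks.
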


\begin{proof}
First we consider the case $\mu\neq 0$. By means of the transformation
\begin{equation} \label{ShearA}
y(z) = 
z^{-\mu/2}\begin{pmatrix} \ms\frac{1}{z} & 0 \\[1ex] -\frac{\rho}{\mu} & 1 \end{pmatrix}y_0(z)
\end{equation}
the differential system \eqref{RegSingA} is equivalent to
\begin{equation} \label{RegSingD}
y_0'(z)=\tfrac{1}{z}\Phi_0(z)y_0(z),\quad
\Phi_0(z) = \sum_{k=0}^\infty z^k\Phi_{0,k},\quad
\Phi_{0,0} = D := \begin{pmatrix} 0 & 0 \\[1ex] 0 & \mu \end{pmatrix}
\end{equation}
where the coeffient matrix $\Phi_0(z)$ is a holomorphic matrix function on the disc $\mathfrak{D}$.

If $\mu\not\in\Z$, then the eigenvalues of the diagonal matrix $D$ do not differ by an integer, and the system \eqref{RegSingD} has a fundamental matrix of the form $Y_0(z) = G(z)z^D$, where $G:\mathfrak{D}\longrightarrow\MzC$ is a holomorphic matrix function with $G(0)=I$ (cf. \cite[Theorem 5.5]{Wasow:1976}). Thus, with regard to \eqref{ShearA},
\begin{equation*}
Y(z) = 
z^{-\mu/2}\begin{pmatrix} \ms\frac{1}{z} & 0 \\[1ex] -\frac{\rho}{\mu} & 1 \end{pmatrix}G(z)
\begin{pmatrix} 1 & 0 \\[1ex] 0 & z^{\mu} \end{pmatrix} = 
\begin{pmatrix} \frac{1}{z} & 0 \\[1ex] 0 & 1 \end{pmatrix}H(z)
\begin{pmatrix} z^{-\mu/2} & 0 \\[1ex] 0 & z^{\mu/2} \end{pmatrix}
\end{equation*}
is a fundamental matrix of \eqref{RegSingA} having the form \eqref{FundMatA} with $q=0$, where
\begin{equation} \label{TrafoH}
H(z) := \begin{pmatrix} \ms 1 & 0 \\[1ex] -\frac{\rho}{\mu} & 1 \end{pmatrix}G(z),\quad 
H(0) = \begin{pmatrix} \ms 1 & 0 \\[1ex] -\frac{\rho}{\mu} & 1 \end{pmatrix}
\end{equation}
Now, we consider the case that $\mu := m$ is a positive integer. If we recursively apply the transformations
\begin{equation} \label{ShearB}
y_{n-1}(z) = \begin{pmatrix} 1 & 0 \\[1ex] \frac{\phi_{n-1}}{n-m}z & z \end{pmatrix}y_n(z)
\end{equation}
for $n=1,\ldots,m-1$, then the vector function $y_n(z)$ is a solution of a differential system
\begin{equation*}
y_n'(z) = \tfrac{1}{z}\Phi_n(z)y_n(z),\quad 
\Phi_n(z) = \sum_{k=0}^\infty z^k\Phi_{n,k},\quad
\Phi_{n,0} = \begin{pmatrix} 0 & 0 \\[1ex] 0 & m-n \end{pmatrix}
\end{equation*}
provided that at each step $\phi_{n-1}$ is taken to be the $(2,1)$-entry of the matrix $\Phi_{n-1,1}$. The coefficient matrix $\Phi_n(z)$ is holomorphic on $\mathfrak{D}$. Moreover, $\Phi_{n,k}$ for $k=1,\ldots,n$ are lower triangular matrices, i.e., their $(1,2)$-entry is zero. Finally, if we apply the shearing transformation
\begin{equation} \label{ShearC}
y_{m-1}(z) = \begin{pmatrix} 1 & 0 \\[1ex] 0 & z \end{pmatrix}y_m(z)
\end{equation}
then we obtain the regular singular $2\times 2$ system
\begin{equation} \label{Jordan}
y_m'(z) = \tfrac{1}{z}\Phi_m(z)y_m(z),\quad
\Phi_m(z) = \sum_{k=0}^\infty z^k\Phi_{m,k},\quad
\Phi_{m,0} = Q := \begin{pmatrix} 0 & 0 \\[1ex] q & 0 \end{pmatrix}
\end{equation}
Here, $q$ is just the $(2,1)$-component of $\Phi_{m-1,1}$. In addition, $\Phi_{m,k}$ are lower triangular matrices for $k=0,\ldots,m$. According to \cite[Theorem 5.5]{Wasow:1976}, the system \eqref{Jordan} has a fundamental matrix of the form
\begin{equation} \label{BlockD}
Y_m(z) = F(z)z^Q = F(z)\begin{pmatrix} 1 & 0 \\[1ex] q\log z & 1 \end{pmatrix}
\end{equation}
where $F(z) = \sum_{k=0}^\infty z^k F_k$ is a holomorphic matrix function on $\mathfrak{D}$ satisfying $F_0=I$. Note that $F(z)$ is a solution of the matrix differential equation
\begin{equation*}
z F'(z) = \Phi_m(z)F(z)-F(z)Q,\quad z\in\mathfrak{D}
\end{equation*}
and the coefficients $F_k\in\MzC$ for $k>0$ are uniquely determined by the recurrence relation
\begin{equation*}
F_kQ - (Q-kI)F_k  = \sum_{n=0}^{k-1}\Phi_{m,k-n}F_n
\end{equation*}
Since $Q$ and $\Phi_{m,k}$ for $k=0,\ldots,m$ are lower triangular matrices, it is easy to verify that $F_k$ are also lower triangular for $k = 0,\ldots,m$. Now, by combining the transformations \eqref{ShearB} and \eqref{ShearC} with \eqref{BlockD}, it follows that the differential system \eqref{RegSingD} has a fundamental matrix of the form
\begin{equation*}
Y_0(z) = 
\begin{pmatrix} 1 & 0 \\[1ex] z\psi(z) & z^m\end{pmatrix} F(z)
\begin{pmatrix} 1 & 0 \\[1ex] q\log z & 1 \end{pmatrix}
\end{equation*}
where $\psi(z)$ is a polynomial in $z$ of degree $m-2$, and $\psi\equiv 0$ in case of $m=1$. Moreover, $F(z)$ can be written the form
\begin{equation*}
F(z) 
= \begin{pmatrix} 1+zf_{11}(z) & z^{m+1}f_{12}(z) \\[1ex] zf_{21}(z) & 1+zf_{22}(z) \end{pmatrix}
\end{equation*}
with some holomorphic functions $f_{ij}(z)$ on $\mathfrak{D}$. If we define
\begin{align*}
G(z) & := 
\begin{pmatrix} 1 & 0 \\[1ex] z\psi(z) & z^m \end{pmatrix} F(z)
\begin{pmatrix} 1 & 0 \\[1ex] 0 & z^{-m} \end{pmatrix} \\ & =
\begin{pmatrix} 1+zf_{11}(z) & zf_{12}(z) \\[1ex] z\psi(z)(1+zf_{11}(z))+z^{m+1}f_{21}(z) & 1+zf_{22}(z)+z^2\psi(z)f_{12}(z) \end{pmatrix}
\end{align*}
then $G:\mathfrak{D}\longrightarrow\MzC$ is holomorphic with $G(0)=I$, and the fundamental matrix of \eqref{RegSingD} becomes
\begin{equation*}
Y_0(z) = G(z)\begin{pmatrix} 1 & 0 \\[1ex] 0 & z^m \end{pmatrix}
\begin{pmatrix} 1 & 0 \\[1ex] q\log z & 1 \end{pmatrix}
\end{equation*}
Applying \eqref{ShearA} and defining $H(z)$ for $\mu=m$ as in \eqref{TrafoH} yields a fundamental matrix of \eqref{RegSing}, which takes the form
\begin{equation*}
Y(z) = 
\begin{pmatrix} \frac{1}{z} & 0 \\[1ex] 0 & 1 \end{pmatrix}H(z)
\begin{pmatrix} z^{-\mu/2} & 0 \\[1ex] 0 & z^{\mu/2} \end{pmatrix}
\begin{pmatrix} 1 & 0 \\[1ex] q\log z & 1 \end{pmatrix}
\end{equation*}
It remains to study the case $\mu=0$. By virtue of the transformation
\begin{equation*}
y(z) = \begin{pmatrix} \frac{1}{z} & 0 \\[1ex] 0 & 1 \end{pmatrix}y_0(z)
\end{equation*}
the system \eqref{RegSingA} with $\mu=0$ is equivalent to
\begin{equation*}
y_0'(z) = \left(\tfrac{1}{z}\,Q 
+ S(z)\right)y_0(z),\quad Q := \begin{pmatrix} 0 & 0 \\[1ex] \rho & 0 \end{pmatrix}
\end{equation*}
with some holomorphic matrix function $S(z)$ on $\mathfrak{D}$. It has a fundamental matrix $Y_0(z) = H(z)z^Q$, where $H:\mathfrak{D}\longrightarrow\MzC$ is holomorphic and $H(0)=I$. Thus,
\begin{equation*}
Y(z) := \begin{pmatrix} \frac{1}{z} & 0 \\[1ex] 0 & 1 \end{pmatrix}H_0(z)z^Q
= \begin{pmatrix} \frac{1}{z} & 0 \\[1ex] 0 & 1 \end{pmatrix}H_0(z)
  \begin{pmatrix} 1 & 0 \\[1ex] \rho\log z & 1 \end{pmatrix}
\end{equation*}
is a fundamental matrix of \eqref{RegSing} for $\mu=0$, which has the form \eqref{FundMatA} with $p=0$ and $q=\rho$.
\end{proof}

\Cref{lem:StructFM} provides the structure of the fundamental matrices for the differential system \eqref{RegSingA}. If, in addition, the coefficient matrix $R$ depends holomorphically on some parameter, then we obtain the following enhancement:

\begin{Lemma} \label{lem:EntireFM}
Let $\mu\in\C$ be a fixed number satisfying $\re\mu>-1$. Suppose that $\sigma=\sigma(t)$ depends holomorphically on some parameter $t\in\C$. Moreover, assume that $R=R(z,t)$ depends on $z\in\mathfrak{D}$ and $t\in\C$, such that $R:\mathfrak{D}\times\C\longrightarrow\MzC$ is a holomorphic matrix function. Then the differential system \eqref{RegSingA} has a fundamental matrix of the form \eqref{FundMatA}, where $H:\mathfrak{D}\times\C\longrightarrow\MzC$ is a holomorphic matrix function, and also $p=p(t)$, $q=q(t)$ depend holomorphically on $t\in\C$.
\end{Lemma}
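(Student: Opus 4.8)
The plan is to revisit the construction carried out in the proof of \cref{lem:StructFM} and to check that, when $\sigma$ and $R$ depend holomorphically on $t$, every object produced along the way depends holomorphically on $t$ as well. Since $\re\mu>-1$ in particular forces $\mu\notin\{-1,-2,-3,\ldots\}$, \cref{lem:StructFM} applies for each fixed $t$, and only this holomorphic dependence has to be established.

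First, the shearing matrices in \eqref{ShearA}, \eqref{ShearB}, \eqref{ShearC} (and in the substitution used for $\mu=0$) have entries that are polynomials in $z$ whose coefficients are polynomials in $\sigma(t)$ and in the entries of $R(0,t)$ --- in particular in $\rho=\rho(t)$ and in the scalars $\phi_{n-1}=\phi_{n-1}(t)$; hence these coefficients are holomorphic in $t$, and so are $p(t)=-\rho(t)/\mu$ (respectively $p\equiv0$ for $\mu=0$) and $q(t)$ (the relevant matrix entry; $q\equiv0$ when $\mu\notin\Z$, $q=\rho(t)$ when $\mu=0$). No vanishing denominator is introduced, since in \eqref{ShearB} one has $n-m<0$. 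Moreover, because these shearings are polynomial-in-$z$ matrices whose effect on the coefficient matrix is, by the very computation in \cref{lem:StructFM}, again holomorphic on the full disc $\mathfrak{D}$, the same computation shows that the transformed systems \eqref{RegSingD} and \eqref{Jordan} (and the one for $\mu=0$) have coefficient matrices $\Phi_0$, $\Phi_m$, $S$ that are jointly holomorphic on $\mathfrak{D}\times\C$; in particular every Taylor coefficient $\Phi_{m,k}=\Phi_{m,k}(t)$ is holomorphic in $t$. The leading matrix of the reduced system is the constant $D=\operatorname{diag}(0,\mu)$ in the non-integer case and the nilpotent matrix $Q=Q(t)$ of \eqref{Jordan}, with $Q(t)^2=0$ for every $t$, in the integer case.

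The crux is to show that the holomorphic solution $F(z)=\sum_{k\ge0}z^kF_k$ of $zF'=\Phi_mF-FQ$ with $F_0=I$ (and, likewise, $G$ in the non-integer case) is \emph{jointly} holomorphic in $(z,t)$ on $\mathfrak{D}\times\C$. Its coefficients obey
\begin{equation*}
\mathcal{L}_k(t)\,F_k(t)=\sum_{n=0}^{k-1}\Phi_{m,k-n}(t)\,F_n(t),\qquad
\mathcal{L}_k(t)\,X:=XQ(t)-\bigl(Q(t)-kI\bigr)X
\end{equation*}
(with $D$ in place of $Q(t)$ in the non-integer case), and for $k\ge1$ the linear map $\mathcal{L}_k(t)$ on $\MzC$ has spectrum $\{k\}$, respectively $\{k,\,k+\mu,\,k-\mu\}$, which is disjoint from $\{0\}$ by $\re\mu>-1$ together with, in the non-integer subcase, $\mu\notin\Z$; thus $\mathcal{L}_k(t)$ is invertible for all $k\ge1$, its inverse is holomorphic in $t$, and $\|\mathcal{L}_k(t)^{-1}\|=\Osym(1/k)$ locally uniformly in $t$ (since $\mathcal{L}_k(t)=kI+N(t)$ with $N(t)$ nilpotent of locally bounded norm in the integer case, and $\mathcal{L}_k$ is diagonalizable with eigenvalues of modulus $\ge k-|\mu|$ in the non-integer case). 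By induction each $F_k=F_k(t)$ is then holomorphic in $t$. Given a compact $K\subset\C$ and $r<\varepsilon$, Cauchy's estimate applied to $\Phi_m$ on $\mathfrak{D}\times K$ gives $\|\Phi_{m,k}(t)\|\le Mr^{-k}$ for all $k$ and $t\in K$; inserting this and $\|\mathcal{L}_k(t)^{-1}\|\le C/k$ into the recurrence, a standard majorant induction yields $\|F_k(t)\|\le A\,r^{-k}$ for all $k$ and $t\in K$, with $A=A(K,r)$. Since $r<\varepsilon$ and $K$ are arbitrary, the partial sums $\sum_{k=0}^{N}z^kF_k(t)$ --- each jointly holomorphic on $\mathfrak{D}\times\C$ --- converge locally uniformly on $\mathfrak{D}\times\C$, so their limit $F(z,t)$ is holomorphic there.

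Substituting the jointly holomorphic $F$ (respectively $G$, respectively the matrix $H_0$ of the $\mu=0$ case) back through the shearing transformations, whose entries were already seen to be holomorphic in $(z,t)$, reconstructs --- exactly as in \cref{lem:StructFM} --- a fundamental matrix of the form \eqref{FundMatA} in which $H:\mathfrak{D}\times\C\longrightarrow\MzC$ is holomorphic and $p=p(t)$, $q=q(t)$ are holomorphic in $t$. I expect the joint-holomorphy step to be the main obstacle: one must exploit that the Sylvester operators $\mathcal{L}_k$ are invertible (using $\re\mu>-1$, and $\mu\notin\Z$ in the non-integer case) with inverses of norm $\Osym(1/k)$ locally uniformly in $t$, and combine this with $t$-uniform Cauchy estimates for $\Phi_m$ so that the Frobenius series converges locally uniformly in $(z,t)$; everything else is a bookkeeping check that the algebraic operations in \cref{lem:StructFM} preserve holomorphy in $t$.
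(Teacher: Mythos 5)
Your argument is correct, but it takes a genuinely different route from the paper. You go back into the construction of \cref{lem:StructFM}, track the $t$-dependence of every shearing transformation and every Taylor coefficient, and then prove joint holomorphy of the Frobenius series by hand: invertibility of the Sylvester operators $\mathcal{L}_k(t)$ with $\|\mathcal{L}_k(t)^{-1}\|=\Osym(1/k)$ locally uniformly, combined with $t$-uniform Cauchy estimates for $\Phi_m$, to get locally uniform convergence of $\sum_k z^kF_k(t)$ on $\mathfrak{D}\times\C$. The paper avoids all of this: it writes \eqref{RegSingA} as $y'=\frac{1}{z}\Psi(z,t)y$, diagonalizes $\Psi(0,t)$ by an explicit $t$-holomorphic matrix $\tilde G(t)$ (using that the exponents $-\frac{\mu}{2}-1$ and $\frac{\mu}{2}$ are distinct and $t$-independent, with $\re(-\frac{\mu}{2}-1)<\re\frac{\mu}{2}$ from $\re\mu>-1$), and then simply cites \cite[Lemma 6]{BSW:2004}, which already delivers a fundamental matrix $\tilde G(t)\tilde H(z,t)\,z^{\mathrm{diag}}\,(\cdots)$ with $\tilde H$ jointly holomorphic and $\tilde H(0,t)=I$; the only remaining work is the change of form $H(z,t)=\mathrm{diag}(z,1)\tilde G(t)\tilde H(z,t)\mathrm{diag}(z^{-1},1)$ and the identification $q=\tilde q$. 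Your version is self-contained and makes the convergence mechanism explicit (essentially reproving the parameter-dependent Frobenius theory that the cited lemma encapsulates), at the cost of length and of having to carry the integer/non-integer/$\mu=0$ case distinctions of \cref{lem:StructFM} through the whole argument; the paper's version is much shorter and case-free but rests entirely on the external lemma. One small imprecision on your side: the majorant induction does not literally give $\|F_k(t)\|\le A\,r^{-k}$ (the recursion $f_k\le\frac{CM}{k}\sum_{n<k}f_n$ yields a bound of the type $A\,k^{CM}r^{-k}$), but since $r<\varepsilon$ was arbitrary this still gives convergence on all of $\mathfrak{D}$, so the conclusion stands.
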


\begin{proof}
The differential system \eqref{RegSingA} has the form
\begin{equation*}
\frac{\partial y}{\partial z}(z,t) = \frac{1}{z}\Psi(z,t)y(z,t),\quad 
(z,t)\in\left(\mathfrak{D}\setminus\{0\}\right)\times\C
\end{equation*}
where the coefficient matrix
\begin{equation*}
\Psi(z,t) = 
\begin{pmatrix} -\frac{\mu}{2}-1 & \sigma(t) \\[1ex] 0 & \frac{\mu}{2} \end{pmatrix} + zR(z,t)
\end{equation*}
is a holomorphic function on $\mathfrak{D}\times\C$. The eigenvalues $-\frac{\mu}{2}-1$ and $\frac{\mu}{2}$ of $\Psi(0,t)$ are distinct and independent of $t\in\C$. In particular, $\re\mu>-1$ implies $\re(-\frac{\mu}{2}-1)<\re\frac{\mu}{2}$, and we have
\begin{equation*}
\tilde G(t)^{-1}\Psi(0,t)\tilde G(t) = D := \begin{pmatrix} -\frac{\mu}{2}-1 & 0 \\[1ex] 0 & \frac{\mu}{2} \end{pmatrix}
\quad\mbox{with}\quad
\tilde G(t) := \begin{pmatrix} 1 & \frac{\sigma(t)}{\mu+1} \\[1ex] 0 & 1 \end{pmatrix}
\end{equation*}
From \cite[Lemma 6]{BSW:2004} it follows that \eqref{RegSingA} has a fundamental matrix of the form
\begin{equation} \label{FundMatB}
\tilde G(t)\tilde H(z,t)\begin{pmatrix} z^{-\mu/2-1} & 0 \\[1ex] 0 & z^{\mu/2} \end{pmatrix}
\begin{pmatrix} 1 & 0 \\[1ex] \tilde q(t)\log z & 1 \end{pmatrix}
\end{equation}
where $\tilde H:\mathfrak{D}\times\C\longrightarrow\MzC$, $\tilde q:\mathfrak{D}\longrightarrow\C$ are holomorphic functions and $\tilde H(0,t)=I$ for all $t\in\C$. Comparing \eqref{FundMatB} to the fundamental matrix
\begin{equation*}
\begin{pmatrix} \frac{1}{z} & 0 \\[1ex] 0 & 1 \end{pmatrix}H(z,t)
\begin{pmatrix} z^{-\mu/2} & 0 \\[1ex] 0 & z^{\mu/2} \end{pmatrix}
\begin{pmatrix} 1 & 0 \\[1ex] q(t)\log z & 1 \end{pmatrix}
\end{equation*}
given by \eqref{FundMatA}, it follows that $q(t)=\tilde q(t)$, and 
\begin{equation*}
H(z,t) = \begin{pmatrix} z & 0 \\[1ex] 0 & 1 \end{pmatrix}\tilde G(t)\tilde H(z,t)\begin{pmatrix}  \frac{1}{z} & 0 \\[1ex] 0 & 1 \end{pmatrix}
\end{equation*}
is a holomorphic matrix function with respect to $(z,t)\in\mathfrak{D}\times\C$.
\end{proof}

We can apply \cref{lem:StructFM} straightforwardly to the differential system \eqref{RegSing}, where $\sigma = \beta - t$, $\rho = 1$, and
\begin{equation*}
R(z) := \frac{1}{z-1}\begin{pmatrix} -\frac{\mu}{2}-1 & \beta+t \\[1ex] 0 & \frac{\mu}{2} \end{pmatrix} + \begin{pmatrix} 0 & -4\gamma^2 \\[1ex] 1 & 0 \end{pmatrix}\quad\mbox{with}\quad
R(0) = \begin{pmatrix} \ast & \ast \\[1ex] 1 & \ast \end{pmatrix}
\end{equation*}
is a holomorphic matrix function on the unit disk $\mathfrak{D}_0:=\{z\in\C:|z|<1\}$. If $\mu\in\C\setminus\{-1,-2,-3,\ldots\}$, then it has a fundamental matrix of the form
\begin{equation} \label{FundMat0}
Y_0(z) = \begin{pmatrix} \frac{1}{z} & 0 \\[1ex] 0 & 1 \end{pmatrix}H_0(z)
\begin{pmatrix} z^{-\mu/2} & 0 \\[1ex] 0 & z^{\mu/2} \end{pmatrix}
\begin{pmatrix} 1 & 0 \\[1ex] q_0\log z & 1 \end{pmatrix},\quad
H_0(0) = \begin{pmatrix} 1 & 0 \\[1ex] p_0 & 1 \end{pmatrix}
\end{equation}
where $H_0:\mathfrak{D}_0\longrightarrow\MzC$ is a holomorphic matrix function. Moreover, we get $q_0=0$ for the case $\mu\not\in\Z$, and $p_0=-\frac{1}{\mu}$ for the case $\mu\neq 0$; if $\mu=0$, then $p_0=0$ and $q_0=1$.

Similarly, \cref{lem:StructFM} yields a fundamental matrix in a neighborhood of $z=1$: By applying the transformation $\tilde y(z) := y(1-z)$, \eqref{RegSing} is equivalent to
\begin{equation*}
\tilde y'(z) = \left(\frac{1}{z}\begin{pmatrix} -\frac{\mu}{2}-1 & \beta+t \\[1ex] 0 & -\frac{\mu}{2} \end{pmatrix} + \frac{1}{z-1}\begin{pmatrix} -\frac{\mu}{2}-1 & \beta-t \\[1ex] 0 & \frac{\mu}{2} \end{pmatrix} + \begin{pmatrix} \ms 0 & 4\gamma^2 \\[1ex] -1 & 0 \end{pmatrix}\right)\tilde y(z)
\end{equation*}
This system has the form \eqref{RegSingA} with $\sigma := \beta+t$ and $\rho = -1$. If $\mu\in\C\setminus\{-1,-2,-3,\ldots\}$, then \cref{lem:StructFM} provides a fundamental matrix for \eqref{RegSing} of the form
\begin{equation} \label{FundMat1}
Y_1(z) = \begin{pmatrix} \frac{1}{1-z} & 0 \\[1ex] 0 & 1 \end{pmatrix}H_1(z)
\begin{pmatrix} (1-z)^{-\mu/2} & 0 \\[1ex] 0 & (1-z)^{\mu/2} \end{pmatrix}
\begin{pmatrix} 1 & 0 \\[1ex] q_1\log(1-z) & 1 \end{pmatrix}
\end{equation}
where $H_1:\mathfrak{D}_1\longrightarrow\MzC$ is a holomorphic matrix function on the unit disk $\mathfrak{D}_1:=\{z\in\C:|z-1|<1\}$ centered at $z=1$. In addition,
\begin{equation*}
H_1(1) = \begin{pmatrix} 1 & 0 \\[1ex] p_1 & 1 \end{pmatrix}
\end{equation*}
where $q_1=0$ for the case $\mu\not\in\Z$ and $p_1=\frac{1}{\mu}$ for the case $\mu\neq 0$; if $\mu=0$, then $p_1=0$ and $q_1=-1$.

In a next step, we will extract the solutions of \eqref{RegSing} for which the second component is bounded at both singular points $z=0$ and $z=1$.

\begin{Lemma} \label{lem:RegSol0}
If $\re\mu>0$ or $\mu=0$, then the differential system \eqref{RegSing} has a Floquet solution
\begin{equation} \label{FloqSol0}
y_0(z) = z^{\mu/2}h_0(z),\quad h_0(z) = \sum_{k=0}^\infty z^k a_k,\quad
a_0 := \begin{pmatrix} \frac{\beta-t}{\mu+1} \\[1ex] 1 \end{pmatrix}
\end{equation}
where $h_0:\mathfrak{D}_0\longrightarrow\Cz$ is a holomorphic vector function. If, in addition, $Y_1(z)$ denotes the fundamental matrix \eqref{FundMat1}, then
\begin{equation} \label{FundCon1}
y_0(z) = Y_1(z)c\quad\mbox{with some vector}\quad c = \begin{pmatrix} c_1 \\[1ex] c_2 \end{pmatrix}\in\Cz
\end{equation}
where the connection coefficients $c_1=c_1(t)$, $c_2=c_2(t)$ depend holomorphically on the parameter $t\in\C$. Finally, $\lambda=t+\mu(\mu+1)$ is an eigenvalue of the CSWE \eqref{CSWE} if and only if $c_1(t)=0$; in this case, $y_0(z)$ is a constant multiple of 
\begin{equation} \label{FloqSol1}
Y_1(z)e_2 = (1-z)^{\mu/2}\sum_{k=0}^\infty (1-z)^k b_k,\quad
b_0 := \begin{pmatrix} \frac{\beta+t}{\mu+1} \\[1ex] 1 \end{pmatrix}
\end{equation}
\end{Lemma}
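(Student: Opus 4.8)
The plan is to take $y_0$ to be the second column of the fundamental matrix \eqref{FundMat0}, to obtain the connection relation \eqref{FundCon1} by analytic continuation across $\mathfrak{D}_0\cap\mathfrak{D}_1$, and to deduce the eigenvalue criterion from the boundedness behaviour of the columns of $Y_0$ and $Y_1$ at the two singular points. First I would apply \cref{lem:StructFM} to \eqref{RegSing} (here $\mu\notin\{-1,-2,\ldots\}$ holds, with $\sigma=\beta-t$ and $\rho=1$) to get $Y_0$ in the form \eqref{FundMat0}. The rightmost factor there fixes $e_2$, the diagonal factor sends $e_2$ to $z^{\mu/2}e_2$, and the $(1,2)$-entry of $H_0$ vanishes at $z=0$ (since $H_0(0)$ is lower triangular with unit diagonal), so that dividing the first component by $z$ leaves a holomorphic function; hence the second column of $Y_0$ has the form $z^{\mu/2}h_0(z)$ with $h_0:\mathfrak{D}_0\to\Cz$ holomorphic and $h_0(0)$ of second entry $H_{0,22}(0)=1$. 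Being a nonzero Floquet solution attached to the exponent $\mu/2$, it has leading coefficient $h_0(0)$ equal to an eigenvector of the residue matrix of \eqref{RegSing} at $z=0$ for the eigenvalue $\mu/2$; normalised to have second entry $1$, this eigenvector is $a_0=\left(\tfrac{\beta-t}{\mu+1},1\right)\T$. Finally \cref{lem:EntireFM} applies (as $\re\mu>-1$) and lets me take $H_0$, hence $h_0$, holomorphic in $(z,t)\in\mathfrak{D}_0\times\C$.

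Next, for the connection relation, I would note that $\mathfrak{D}_0\cap\mathfrak{D}_1$ is non-empty and convex, hence simply connected, and contains neither $z=0$ nor $z=1$; on it, branches of $z^{\mu/2}$, $(1-z)^{\pm\mu/2}$ and $\log(1-z)$ can be fixed, and both $y_0$ and the columns of the fundamental matrix $Y_1$ from \eqref{FundMat1} are single-valued solutions of \eqref{RegSing}. Since the columns of $Y_1$ form a basis of the solution space, $y_0(z)=Y_1(z)c$ there for a constant vector $c=(c_1,c_2)\T\in\Cz$. Evaluating at a fixed point of $\mathfrak{D}_0\cap\mathfrak{D}_1$ gives $c=Y_1(\cdot,t)^{-1}y_0(\cdot,t)$; since a fundamental matrix is everywhere invertible, since \cref{lem:EntireFM} applied to the $z\mapsto 1-z$ transform of \eqref{RegSing} makes $Y_1(\cdot,t)$ holomorphic in $t$, and since the first paragraph makes $y_0(\cdot,t)$ holomorphic in $t$, the coefficients $c_1(t),c_2(t)$ depend holomorphically on $t\in\C$.

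The eigenvalue criterion would follow from the following dichotomy. By \eqref{FundMat0}, as $z\to 0^+$ along the reals the second component of the first column of $Y_0$ is asymptotic to $p_0z^{-\mu/2}$ when $\mu\ne 0$ (with $p_0=-1/\mu\ne 0$) and to $q_0\log z$ when $\mu=0$ (with $q_0=1$); in both cases it is unbounded --- precisely because $\re\mu>0$ in the first case and $\mu=0$ in the second --- and no cancellation occurs, the remaining contributions being of strictly smaller order, whereas the second column $z^{\mu/2}h_0(z)$ has bounded second component. Hence a solution of \eqref{RegSing} has bounded second component near $z=0$ if and only if it is a scalar multiple of $y_0$, and an identical computation with \eqref{FundMat1} at $z=1$ shows that a solution has bounded second component near $z=1$ if and only if it is a scalar multiple of $Y_1(z)e_2$. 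Now $\lambda=t+\mu(\mu+1)$ is an eigenvalue of \eqref{CSWE} exactly when \eqref{CSWE} has a nontrivial bounded solution $w$ on $(-1,1)$; by the equivalence between \eqref{CSWE} and \eqref{RegSing} recorded above (the second component of a solution of \eqref{RegSing} solves \eqref{CSWE}, and conversely $w$ together with $v$ from \eqref{DefV} solves \eqref{RegSing}, with $x=2z-1$), this is equivalent to \eqref{RegSing} having a nontrivial solution whose second component is bounded on $(0,1)$. By the dichotomy at $z=0$ such a solution is a nonzero multiple of $y_0$; being then also a scalar multiple of $Y_1(z)e_2$ by the dichotomy at $z=1$, and using $y_0=Y_1c$, we get that $c$ is a scalar multiple of $e_2$, i.e. $c_1=0$. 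Conversely, if $c_1(t)=0$ then $y_0=c_2Y_1e_2$ with $c_2\ne 0$ (otherwise $y_0\equiv 0$), so $w(x):=e_2\T y_0\!\left(\tfrac{x+1}{2}\right)$ is bounded near $x=-1$ (being $z^{\mu/2}$ times a function holomorphic on $\mathfrak{D}_0$) and near $x=1$ (being $c_2$ times the second component of $Y_1e_2$, which by \eqref{FundMat1} is $(1-z)^{\mu/2}$ times a function holomorphic near $z=1$), solves \eqref{CSWE}, and is nontrivial since its leading coefficient $e_2\T a_0=1$ is nonzero; hence $\lambda$ is an eigenvalue. In that case $y_0=c_2Y_1e_2$ is the asserted constant multiple, and computing $Y_1e_2$ from \eqref{FundMat1} exactly as the second column of $Y_0$ was computed --- with leading coefficient the appropriately normalised eigenvector $b_0=\left(\tfrac{\beta+t}{\mu+1},1\right)\T$ of the residue matrix of the $z\mapsto 1-z$ transform at its singular point --- yields the stated expansion $Y_1(z)e_2=(1-z)^{\mu/2}\sum_{k\ge 0}(1-z)^kb_k$.

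The step I expect to be the main obstacle is the boundedness dichotomy: one must check that the Floquet solution with exponent $-\tfrac\mu2-1$ genuinely has an unbounded second component --- this is exactly where the hypothesis $\re\mu>0$ or $\mu=0$ enters --- and that no accidental cancellation against the lower-order (in particular logarithmic) terms takes place. The rest --- the Frobenius identification of $a_0$ and $b_0$, the analytic continuation producing $c$, and the holomorphic dependence on $t$ via \cref{lem:EntireFM} --- should be routine.
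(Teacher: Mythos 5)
Your proposal is correct and follows essentially the same route as the paper: identify $y_0$ with the second column of $Y_0$ (a Floquet solution with leading coefficient the eigenvector $a_0$ of the residue matrix), obtain $c$ by expressing $y_0$ in the basis given by the columns of $Y_1$ on the overlap and get holomorphy in $t$ from \cref{lem:EntireFM}, and derive the criterion $c_1=0$ from the boundedness dichotomy of the two Floquet solutions at each singular point, using $p_0=-\tfrac1\mu\neq 0$ (resp.\ $q_0=1$) and $p_1=\tfrac1\mu$ (resp.\ $q_1=-1$). The only cosmetic difference is that the paper invokes Walter's Frobenius-type theorem directly for the existence of the Floquet solutions \eqref{FloqSol0} and \eqref{FloqSol1}, whereas you read them off from the structure of the fundamental matrices of \cref{lem:StructFM}; the content is the same.
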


\begin{proof}
The system \eqref{RegSing} has the form $y'(z) = \big(\frac{1}{z}\,A + R(z)\big)y(z)$ with 
\begin{equation*}
A := \begin{pmatrix} -\frac{\mu}{2}-1 & \beta-t \\[1ex] 0 & \frac{\mu}{2} \end{pmatrix}
\quad\mbox{and}\quad
R(z) := \frac{1}{z-1}\begin{pmatrix} -\frac{\mu}{2}-1 & \beta+t \\[1ex] 0 & \frac{\mu}{2} \end{pmatrix} + \begin{pmatrix} 0 & -4\gamma^2 \\[1ex] 1 & 0 \end{pmatrix}
\end{equation*}
Here, $R(z)$ is a holomorphic matrix function on $\mathfrak{D}_0$, and $\frac{\mu}{2}$ is an eigenvalue of $A$ but $\frac{\mu}{2}+k$ is not an eigenvalue for any $k\in\N$. Moreover, $a_0$ is an eigenvector of $A$ corresponding to $\frac{\mu}{2}$. According to \cite[§24.VI.(d)]{Walter:1998}, 
the system \eqref{RegSing} has a solution as indicated in \eqref{FloqSol0}. Note that this solution coincides with the fundamental solution $Y_0(z)e_2$, where $Y_0(z)$ denotes the fundamental matrix \eqref{FundMat0}. Another  fundamental solution is given by
\begin{equation*}
Y_0(z)e_1 = 
\begin{pmatrix} \frac{1}{z} & 0 \\[1ex] 0 & 1 \end{pmatrix}H_0(z)
\begin{pmatrix} z^{-\mu/2} \\[1ex] q_0 z^{\mu/2}\log z \end{pmatrix},\quad
H_0(z) = \begin{pmatrix} 1+o(1) & o(1) \\[1ex] p_0+o(1) & 1+o(1) \end{pmatrix}
\end{equation*}
The second component of this vector function is not bounded for $z\to 0$ since
\begin{equation*}
Y_0(z)e_1 = \begin{pmatrix} \ast \\[1ex] 
z^{-\mu/2}(p_0+o(1)) + q_0(1+o(1))z^{\mu/2}\log z \end{pmatrix}
\end{equation*}
where $p_0=-\frac{1}{\mu}\neq 0$ in case of $\re\mu>0$ and $p_0=0$, $q_0=1$ for $\mu=0$. By definition, $\lambda\in\C$ is an eigenvalue of \eqref{CSWE} if and only if the system \eqref{RegSing} has a nontrivial solution $y:(0,1)\longrightarrow\MzC$ where its second component is bounded on $(0,1)$. In particular, such a solution $y(z)$ must be a constant multiple of $y_0(z)$. Moreover, there is another set of fundamental solutions, namely $y_1(z):=Y_1(z)e_1$ and $y_2(z):=Y_1(z)e_2$, and hence the solution $y_0(z)$ can be written as a linear combination $y_0(z)=c_1 y_1(z)+c_2 y_2(z)$ with some connection coefficients $c_1,c_2\in\C$. Note that $Y_0(z)e_2=y_0(z)=Y_1(z)c$ holds for any $z\in\mathfrak{D}_0\cap\mathfrak{D}_1$, and in particular we obtain $c = Y_1(z)^{-1}Y_0(z)e_2$ for $z=\frac{1}{2}$. Since $Y_0(\frac{1}{2})$ and $Y_1(\frac{1}{2})$ depend holomorphically on $t\in\C$ by means of \cref{lem:EntireFM}, it follows that $c=c(t)$ is an entire vector function, i.e., also its entries $c_1(t)$ and $c_2(t)$ depend holomorphically on $t\in\C$. Furthermore, the holomorphic part in $Y_1(z)$ asymptotically behaves like
\begin{equation*}
H_1(z) =  \begin{pmatrix} 1+o(1) & o(1) \\[1ex] p_1+o(1) & 1+o(1) \end{pmatrix}\quad\mbox{as}\quad z\to 1
\end{equation*}
Multiplying \eqref{FundMat1} from the left by the unit vector $e_1$ gives
\begin{equation} \label{FundSol1}
y_1(z) = \begin{pmatrix} \frac{1}{1-z} & 0 \\[1ex] 0 & 1 \end{pmatrix}\begin{pmatrix} 1+o(1) & o(1) \\[1ex] p_1+o(1) & 1+o(1) \end{pmatrix}\begin{pmatrix} (1-z)^{-\mu/2} \\[1ex] q_1 (1-z)^{\mu/2}\log(1-z) \end{pmatrix}
\end{equation}
which yields the asymptotic behaviour
\begin{equation} \label{Asymp1}
y_1(z) = \begin{pmatrix} \ast \\[1ex] 
(1-z)^{-\mu/2}(p_1+o(1)) + q_1(1+o(1))(1-z)^{\mu/2}\log(1-z) \end{pmatrix}
\end{equation}
as $z\to 1$, where $p_1=\frac{1}{\mu}$ if $\re\mu>0$ and $p_1=0$, $q_1=-1$ in case of $\mu=0$. Hence, the second component of $y_1(z)$ is not bounded near $z=1$. On the other hand, the second component of
\begin{equation} \label{FundSol2}
y_2(z) = 
\begin{pmatrix} \frac{1}{1-z} & 0 \\[1ex] 0 & 1 \end{pmatrix}H_1(z)
\begin{pmatrix} 0 \\[1ex] (1-z)^{\mu/2} \end{pmatrix}
= \begin{pmatrix} \ast \\[1ex] (1-z)^{\mu/2}(1+o(1)) \end{pmatrix}
\end{equation}
is bounded as $z\to 1$. Therefore, $\lambda\in\C$ is an eigenvalue of \eqref{CSWE} if and only if the nontrivial solution $y(z)$ is a constant multiple of $y_0(z)$ and also of $y_2(z)$, which means that $c_1=0$. Further, we can write the system \eqref{RegSing} in the form $y'(z) = \big(\frac{1}{z-1}\,B + S(z)\big)y(z)$ with 
\begin{equation*}
B := \begin{pmatrix} -\frac{\mu}{2}-1 & \beta+t \\[1ex] 0 & \frac{\mu}{2} \end{pmatrix}
\quad\mbox{and}\quad
S(z) := \frac{1}{z}\begin{pmatrix} -\frac{\mu}{2}-1 & \beta-t \\[1ex] 0 & \frac{\mu}{2} \end{pmatrix} + \begin{pmatrix} 0 & -4\gamma^2 \\[1ex] 1 & 0 \end{pmatrix}
\end{equation*}
Here, $\mu$ is an eigenvalue of $B$, while $\frac{\mu}{2}+k$ is not an eigenvalue for any $k\in\N$. Since $S(z)$ is holomorphic on $\mathfrak{D}_1$, there exists a Floquet solution having the form $(1-z)^{\mu/2}\sum_{k=0}^\infty (1-z)^k b_k$ according to \cite[§24.VI.(d)]{Walter:1998}, where $b_0$ must be an eigenvector of $B$ for the eigenvalue $\frac{\mu}{2}$. Comparing this Floquet solution to \eqref{FundSol2} yields \eqref{FloqSol1}, which completes the proof of \cref{lem:RegSol0}.
\end{proof}

Our next aim is to simplify the system \eqref{RegSing} so that the computation of the Floquet solutions and their connection coefficients becomes as simple as possible. For this purpose, we apply the transformation
\begin{equation} \label{TrafoE}
\eta(z) = z^{-\mu/2}(1-z)^{\mu/2}y(z)
\end{equation}
which turns \eqref{RegSing} into the differential system
\begin{equation} \label{RegSing0}
\eta'(z) = \left(\frac{1}{z}\begin{pmatrix} -\mu-1 & \beta-t \\[1ex] 0 & 0 \end{pmatrix} + \frac{1}{z-1}\begin{pmatrix} -1 & \beta+t \\[1ex] \ms 0 & \mu \end{pmatrix} + \begin{pmatrix} 0 & -4\gamma^2 \\[1ex] 1 & 0 \end{pmatrix}\right)\eta(z)
\end{equation}
It can be written in the form $\eta'(z) = \big(\frac{1}{z}A_0 + \frac{1}{z-1}A_1 + C\big)\eta(z)$ with the coefficient matrices
\begin{equation} \label{ResMat}
A_0 := \begin{pmatrix} -\mu-1 & \beta-t \\[1ex] 0 & 0 \end{pmatrix},\quad
A_1 := \begin{pmatrix} -1 & \beta+t \\[1ex] \ms 0 & \mu \end{pmatrix},\quad
  C := \begin{pmatrix} 0 & -4\gamma^2 \\[1ex] 1 & 0 \end{pmatrix}
\end{equation}
According to \cref{lem:RegSol0} and \eqref{TrafoE}, it has a holomorphic solution
\begin{equation} \label{HolSol}
\eta_0(z) := (1-z)^{\mu/2}h_0(z) = \sum_{k=0}^\infty z^k d_k, \quad d_0 = a_0 = \begin{pmatrix} \frac{\beta-t}{\mu+1} \\[1ex] 1 \end{pmatrix}
\end{equation}
on $\mathfrak{D}_0$ and a fundamental matrix $Y(z):=z^{-\mu/2}(1-z)^{\mu/2}Y_1(z)$, which takes the form
\begin{equation} \label{FundMat}
Y(z) = \begin{pmatrix} \frac{1}{1-z} & 0 \\[1ex] 0 & 1 \end{pmatrix}H(z)
\begin{pmatrix} 1 & 0 \\[1ex] 0 & (1-z)^{\mu} \end{pmatrix}
\begin{pmatrix} 1 & 0 \\[1ex] q_1\log(1-z) & 1 \end{pmatrix}
\end{equation}
where $Y_1(z)$ is given by \eqref{FundMat1} and $H(z) := z^{-\mu/2}H_1(z)$ is holomorphic on $\mathfrak{D}_1$. Moreover, \eqref{FundCon1} implies
\begin{equation*}
\eta_0(z) = Y(z)c = c_1 Y(z)e_1 + c_2 Y(z)e_2,\quad\mbox{where}\quad
c = \begin{pmatrix} c_1 \\[1ex] c_2 \end{pmatrix}
\end{equation*}
is a vector which contains the connection coefficients. According to \cref{lem:RegSol0}, $\lambda\in\C$ is an eigenvalue of \eqref{CSWE} if and only if $c_1=0$. For this case, $\eta_0(z)$ is a constant multiple of the vector function $Y(z)e_2=z^{-\mu/2}(1-z)^{\mu/2}Y_1(z)e_2$, where $z^{-\mu/2}$ is holomorphic on the unit disc centered at $z=1$. Thus, by means of \eqref{FloqSol1},
\begin{align*}
Y(z)e_2 
& = (1-z)^\mu z^{-\mu/2}\sum_{k=0}^\infty (1-z)^k b_k \\
& = (1-z)^\mu\sum_{k=0}^\infty (1-z)^k d_k^{(2)}\quad\mbox{with}\quad
d_0^{(2)} = b_0 = \begin{pmatrix} \frac{\beta+t}{\mu+1} \\[1ex] 1 \end{pmatrix}
\end{align*}
is a holomorphic function on $\mathfrak{D}_1$.

\begin{Lemma} \label{lem:ConCoeff}
Let $d_k\in\Cz$ be the series coefficients of the holomorphic solution \eqref{HolSol}, and
$\vartheta\in\Cz$ be given by \eqref{theta}. If $\re\mu > 0$ or $\mu=0$, then
\begin{equation} \label{LimCoeff}
\vartheta\T d_k = c_1 + \Osym(k^{\varepsilon-\mu-2})
\quad\mbox{as}\quad k\to\infty
\end{equation}
holds for any $\varepsilon>0$. In particular, $\lim_{k\to\infty}\vartheta\T d_k = c_1$.
\end{Lemma}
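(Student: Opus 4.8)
The plan is to recover the asymptotics of the vectors $d_k$ from the behaviour of the holomorphic solution $\eta_0(z)=\sum_{k\ge 0}z^kd_k$ of the Fuchsian system \eqref{RegSing0} near its singular point $z=1$. Since the only obstruction to continuing $\eta_0$ analytically out of $z=0$ is $z=1$, the function $\vartheta\T\eta_0$ is holomorphic on a slit disc $\{|z|<R\}\setminus[1,R)$ for some $R>1$, and its Taylor coefficient $\vartheta\T d_k=[z^k]\big(\vartheta\T\eta_0(z)\big)$ is controlled entirely by the local structure of $\vartheta\T\eta_0$ at $z=1$. By \cref{lem:RegSol0} and the discussion following \eqref{FundMat} we have, near $z=1$,
\[
\eta_0(z)=c_1\,Y(z)e_1+c_2\,Y(z)e_2,
\]
where (by \eqref{FundMat1}, \eqref{FloqSol1}, \eqref{FundMat}) $Y(z)e_1$ is dominated by $\tfrac{1}{1-z}e_1$ and $Y(z)e_2$ by $(1-z)^{\mu}b_0$ with $b_0=d_0^{(2)}=\big(\tfrac{\beta+t}{\mu+1},\,1\big)\T$; here $e_1$ and $b_0$ are the right eigenvectors of the residue matrix $A_1$ in \eqref{ResMat} belonging to the exponents $-1$ and $\mu$ at $z=1$, respectively.

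The decisive point is that the vector $\vartheta$ from \eqref{theta} is exactly the left eigenvector of $A_1=\big(\begin{smallmatrix}-1&\beta+t\\0&\mu\end{smallmatrix}\big)$ for the exponent $-1$, normalised so that $\vartheta\T e_1=1$; consequently $\vartheta\T b_0=0$ (equivalently $\vartheta\T d_0^{(2)}=0$). Applying $\vartheta\T$ to the connection formula therefore (i) extracts from $c_1Y(z)e_1$ the pole $\tfrac{c_1}{1-z}$ together with a function holomorphic at $z=1$, and (ii) annihilates the entire leading $(1-z)^{\mu}$-contribution of $c_2Y(z)e_2$, its coefficient $b_0$ being $\vartheta$-orthogonal. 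In the logarithmic Frobenius case — $\mu$ a non-negative integer, $q_1\ne 0$ in \eqref{FundMat1} — the solution $Y(z)e_1$ also carries a $(1-z)^{\mu}\log(1-z)$ term, but its leading vector coefficient is again a multiple of $b_0$, so $\vartheta\T$ kills it too; the identity $H_{12}'(1)=-\tfrac{\beta+t}{\mu+1}$, obtained by matching the two representations of $Y(z)e_2$ in \eqref{FundMat}, makes this cancellation explicit and in particular covers $\mu=0$ (where $q_0=1$, $q_1=-1$). Collecting terms, one obtains near $z=1$
\[
\vartheta\T\eta_0(z)=\frac{c_1}{1-z}+\Psi(z)+\Osym\!\big((1-z)^{\mu+1}\log(1-z)\big),
\]
with $\Psi$ holomorphic at $z=1$ and the logarithmic factor present only when $\mu$ is a non-negative integer.

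It remains to turn this local expansion into a coefficient estimate. Since $[z^k]\tfrac{1}{1-z}=1$ for every $k$, we have $\vartheta\T d_k=c_1+[z^k]g$ with $g(z):=\vartheta\T\eta_0(z)-\tfrac{c_1}{1-z}$, and $g$ is holomorphic on $\{|z|<R\}\setminus[1,R)$, its sole singularity being the branch point at $z=1$. In Cauchy's formula $[z^k]g=\tfrac{1}{2\pi\I}\oint_{|z|=r}g(z)z^{-k-1}\,\mathrm dz$ ($r<1$) we deform the circle onto a keyhole contour hugging the cut $[1,R']$ for some $1<R'<R$: the outer circle $|z|=R'$ contributes $\Osym((R')^{-k})$; the small circle about $z=1$ contributes a quantity tending to $0$, as $g$ stays bounded there; and the two edges of the cut give $\int_{1}^{R'}\big(g(x+\I 0)-g(x-\I 0)\big)x^{-k-1}\,\mathrm dx$, whose integrand is $\Osym\!\big((x-1)^{\mu+1}|\log(x-1)|\big)$ as $x\to 1^{+}$ — the holomorphic part $\Psi$ produces no jump — and is exponentially damped away from $x=1$. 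A standard Watson-type estimate of this integral yields $\Osym(k^{-\mu-2}\log k)$, so altogether $\vartheta\T d_k-c_1=[z^k]g=\Osym(k^{\varepsilon-\mu-2})$ for every $\varepsilon>0$, which is \eqref{LimCoeff}; letting $k\to\infty$ gives $\lim_{k}\vartheta\T d_k=c_1$. As noted in the introduction, one may alternatively quote the results of Sch\"afke and Schmidt \cite{Schaefke:1980,SS:1980}, which encapsulate precisely this contour argument — together with the order of convergence — for meromorphic systems with two regular singular points.

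The technical heart is the middle step: following the singular expansion of $\vartheta\T\eta_0$ at $z=1$ closely enough to see that, thanks to $\vartheta\T b_0=0$, the first correction to $\tfrac{c_1}{1-z}$ is of order $(1-z)^{\mu+1}$ (up to a logarithm) rather than the generic $(1-z)^{\mu}$. The only genuinely delicate case is the logarithmic Frobenius situation for non-negative integer $\mu$, and it is settled by the same orthogonality; everything downstream is routine asymptotic analysis of power-series coefficients.
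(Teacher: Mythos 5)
Your proof is correct, and it reaches the conclusion by a genuinely different route from the paper. The paper first obtains the \emph{vector} asymptotics of $d_k$ by quoting the Sch\"afke--Schmidt connection theorems --- \cite[Theorem 1.4]{SS:1980} with exponents $\alpha_1=-1$, $\alpha_2=\mu$ in the non-integer case, and \cite[Theorem 1.1]{Schaefke:1980} with the matrix reciprocal Gamma function $\frac{1}{\Gamma}(-\ell-Q)$ in the logarithmic case where $\mu=m$ is a non-negative integer --- arriving at $d_k=c_1e_1+(\mathrm{const})\,k^{-\mu-1}b_0+\Osym(k^{\varepsilon-\mu-2})$, and only then applies $\vartheta\T$. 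You apply $\vartheta\T$ first, observe that $\vartheta$ is the left eigenvector of $A_1$ for the exponent $-1$ (hence $\vartheta\T e_1=1$ and $\vartheta\T b_0=0$), derive the scalar singular expansion $\vartheta\T\eta_0(z)=\frac{c_1}{1-z}+\Psi(z)+\bigl(c_1q_1\log(1-z)+c_2\bigr)(1-z)^{\mu+1}\psi(z)$ at $z=1$, and transfer it to Taylor coefficients by a keyhole contour. The cancellation driving the result is identical in both proofs --- $\vartheta\T b_0=0$ promotes the leading correction from $(1-z)^{\mu}$ (i.e.\ $k^{-\mu-1}$) to $(1-z)^{\mu+1}$ (i.e.\ $k^{-\mu-2}$), including the $(1-z)^{\mu}\log(1-z)$ term in the resonant case, whose leading vector is again $b_0$ --- and your eigenvector interpretation of $\vartheta$ is a nice conceptual addition that the paper leaves implicit. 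What your route buys is self-containedness: you open the black box the paper cites. What it costs is that the transfer step is only sketched; to make it airtight you should record the exact Frobenius representation displayed above, with $\Psi$ and $\psi$ holomorphic on all of $\mathfrak{D}_1$ (obtained from \eqref{FundMat} together with $H_{12}(z)=(1-z)\bigl(\tfrac{\beta+t}{\mu+1}+\Osym(1-z)\bigr)$), rather than a bare pointwise $\Osym\bigl((1-z)^{\mu+1}\log(1-z)\bigr)$, since it is the jump across the cut on all of $(1,R')$ --- not just the size of the remainder near $z=1$ --- that enters the edge integral and the Watson-type estimate. Those bookkeeping details, uniformly in the resonant and non-resonant cases, are precisely what \cite{SS:1980} and \cite{Schaefke:1980} package, which is why the paper delegates this step to them.
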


\begin{proof}
First, let us assume that $\re\mu>0$ holds and that $\mu$ is not an integer. In this case, $p_1=\frac{1}{\mu}$ and $q_1=0$ in \eqref{FundMat}, so that
\begin{align*}
\eta_1(z) & := Y(z)e_1 = \begin{pmatrix} \frac{1}{1-z} & 0 \\[1ex] 0 & 1 \end{pmatrix}H(z)\begin{pmatrix} 1 \\[1ex] 0 \end{pmatrix} = (1-z)^{-1}\sum_{k=0}^\infty (1-z)^k d_k^{(1)} \\
\eta_2(z) & := Y(z)e_2 = \begin{pmatrix} \frac{1}{1-z} & 0 \\[1ex] 0 & 1 \end{pmatrix}H(z)\begin{pmatrix} 0 \\[1ex] (1-z)^\mu \end{pmatrix} = (1-z)^{\mu}\sum_{k=0}^\infty (1-z)^k d_k^{(2)}
\end{align*}
is a fundamental set of Floquet solutions, where $d_0^{(1)} = e_1$. Now, using the results of Schäfke and Schmidt given in \cite{SS:1980} and \cite{Schaefke:1980}, we obtain a relationship between the series coefficients of $\eta_0$ and $\eta_1$, $\eta_2$ involving the connection coefficients $c_1$, $c_2$. From \cite[Theorem 1.4]{SS:1980} with $\alpha=0$, $\alpha_1=-1$, $\alpha_2=\mu$ it follows that
\begin{equation*}
d_k 
= c_1\sum_{\ell=0}^{n_1}\frac{\Gamma(k-\ell+1)}{\Gamma(k+1)\Gamma(-\ell+1)}d_\ell^{(1)} 
+ c_2\sum_{\ell=0}^{n_2}\frac{\Gamma(k-\ell-\mu)}{\Gamma(k+1)\Gamma(-\ell-\mu)}d_\ell^{(2)} 
+ \Osym(k^{-\nu-1})
\end{equation*}
holds for $k\to\infty$ with arbitrary integers $n_1,n_2\geq 0$ and $\nu := \min\{n_1,\mu+n_2+1\}$. Since $\frac{1}{\Gamma(-\ell+1)}=0$ for any positive integer $\ell$, we get the formula
\begin{equation*}
d_k = c_1\frac{\Gamma(k+1)}{\Gamma(k+1)\Gamma(1)}d_0^{(1)} 
+ c_2\sum_{\ell=0}^{n_2}\frac{\Gamma(k-\ell-\mu)}{\Gamma(k+1)\Gamma(-\ell-\mu)}d_\ell^{(2)} 
+ \Osym(k^{-\nu-1})
\end{equation*}
which is independent of $n_1$. If we choose $n_1$ sufficiently large, then $\nu = \mu+n_2+1$. Moreover, if we set $n_2=0$, then $\nu=\mu+1$ and
\begin{equation*}
d_k = c_1 e_1 + c_2\frac{\Gamma(k-\mu)}{\Gamma(k+1)\Gamma(-\mu)}b_0 + \Osym(k^{-\mu-2})
\end{equation*}
where we have inserted $d_0^{(1)}=e_1$, $d_0^{(2)}=b_0$ in a final step. Note that the vectors
\begin{equation*}
\vartheta = \begin{pmatrix} \ms 1 \\[1ex] -\frac{\beta+t}{\mu+1} \end{pmatrix}\quad\mbox{and}\quad
b_0 =  \begin{pmatrix} \frac{\beta+t}{\mu+1} \\[1ex] 1 \end{pmatrix}
\end{equation*}
are orthogonal, i.e., $\vartheta\T b_0 = 0$, and in addition $\vartheta\T e_1=1$. Multiplying above asymptotic expansion for $d_k$ by $\vartheta\T$ from the left, we obtain \eqref{LimCoeff} for $\varepsilon=0$ and thus for any $\varepsilon>0$. 

It remains to consider the case where $\mu=m$ is a non-negative integer. Here, \eqref{FundMat} takes the form
\begin{align*}
Y(z) 
& = \begin{pmatrix} \frac{1}{1-z} & 0 \\[1ex] 0 & 1 \end{pmatrix}H(z)
\begin{pmatrix} 1 & 0 \\[1ex] 0 & (1-z)^m \end{pmatrix}
\begin{pmatrix} 1 & 0 \\[1ex] q_1\log(1-z) & 1 \end{pmatrix} \\
& = (1-z)^{-1}\hat H(z)\begin{pmatrix} 1 & 0 \\[1ex] q_1\log(1-z) & 1 \end{pmatrix}
\end{align*} 
where
\begin{equation*}
\hat H(z) := \begin{pmatrix} 1 & 0 \\[1ex] 0 & 1-z \end{pmatrix}H(z)
\begin{pmatrix} 1 & 0 \\[1ex] 0 & (1-z)^m \end{pmatrix} = \sum_{k=0}^\infty (1-z)^k D_k
\end{equation*}
is a holomorphic matrix function on $\mathfrak{D}_1$ with coefficients
\begin{equation} \label{AsymExp}
D_0 = \begin{pmatrix} 1 & 0 \\[1ex] 0 & 0 \end{pmatrix},\quad
D_k = \begin{pmatrix} \ast & 0 \\[1ex] \ast & 0 \end{pmatrix}\quad\mbox{for $k=1,\ldots,m$,}\quad D_{m+1} = \begin{pmatrix} \ast & \ast \\[1ex] \ast & 1 \end{pmatrix}
\end{equation}
In particular, $D_k e_2 = o$ for $k=0,\ldots,m$ is the zero vector, and therefore
\begin{equation*}
\eta_2(z) 
= (1-z)^{-1}\sum_{k=m+1}^\infty (1-z)^k D_k\begin{pmatrix} 0 \\[1ex] 1 \end{pmatrix}
= (1-z)^m\sum_{k=0}^\infty (1-z)^k D_{k+m+1}\begin{pmatrix} 0 \\[1ex] 1 \end{pmatrix}
\end{equation*} 
is a Floquet solution of \eqref{RegSing0} corresponding to the characteristic exponent $m$ at $z=1$. Hence, 
\begin{equation*}
D_{m+1} e_2 = b_0 = \begin{pmatrix} \frac{\beta+t}{\mu+1} \\[1ex] 1 \end{pmatrix},\quad\mbox{i.e.,}\quad
D_{m+1} = \begin{pmatrix} \ast & \frac{\beta+t}{\mu+1} \\[1ex] \ast & 1 \end{pmatrix}
\end{equation*}
Moreover, since
\begin{equation*}
(1-z)^{-1}\begin{pmatrix} 1 & 0 \\[1ex] q_1\log(1-z) & 1 \end{pmatrix} = (1-z)^Q
\quad\mbox{with}\quad Q := \begin{pmatrix} -1 & \ms 0 \\[1ex] q_1 & -1 \end{pmatrix}
\end{equation*}
the fundamental matrix \eqref{FundMat} becomes $Y(z) = \sum_{k=0}^\infty (1-z)^k D_k(1-z)^Q$. Applying \cite[Theorem 1.1]{Schaefke:1980} with $\alpha=0$ and $\gamma_{-}=-1$, we obtain
\begin{equation} \label{dCoeff}
d_k = \sum_{\ell=0}^{n}\frac{1}{\Gamma(k+1)}D_\ell\frac{1}{\Gamma}(-\ell-Q)\Gamma(k-\ell-Q)c + \Osym\left(k^{\varepsilon-n-1}\right)
\end{equation}
for $k\to\infty$ with arbitrary $n\in\N$ and $\varepsilon > 0$. The definition and properties of the reciprocal Gamma function for matrices can be found in the appendix of \cite{Schaefke:1980}. In particular, for a lower-triangular Jordan block we have
\begin{equation*}
\frac{1}{\Gamma}\begin{pmatrix} \kappa & 0 \\[1ex] q & \kappa \end{pmatrix}  
= \begin{pmatrix} \frac{1}{\Gamma}(\kappa) & 0 \\[1ex] q\,(\frac{1}{\Gamma})'(\kappa) & \frac{1}{\Gamma}(\kappa) \end{pmatrix}\quad\mbox{for any}\quad q,\kappa\in\C
\end{equation*}
according to \cite[Theorem A.2.(ii)]{Schaefke:1980}, which implies
\begin{equation*}
D_0\frac{1}{\Gamma}(-Q)\Gamma(k-Q) =
\begin{pmatrix} 1 & 0 \\[1ex] 0 & 0 \end{pmatrix}
\begin{pmatrix} 1 & 0 \\[1ex] \ast & 1 \end{pmatrix}
\begin{pmatrix} \Gamma(k+1) & 0 \\[1ex] \ast & \Gamma(k+1) \end{pmatrix} = \Gamma(k+1)D_0
\end{equation*}
Furthermore, for any integers $\ell>0$ and $k>\ell-1$ we get
\begin{align*}
& D_\ell\frac{1}{\Gamma}(-\ell-Q)\Gamma(k-\ell-Q) \\
& = D_\ell\begin{pmatrix} \frac{1}{\Gamma}(1-\ell) & 0 \\[1ex] q_1(\frac{1}{\Gamma})'(1-\ell) & \frac{1}{\Gamma}(1-\ell) \end{pmatrix}\begin{pmatrix} \Gamma(k-\ell+1) & 0 \\[1ex] \ast & \Gamma(k-\ell+1) \end{pmatrix} \\
& = D_\ell\begin{pmatrix} 0 & 0 \\[1ex] (-1)^{\ell-1}(\ell-1)!\,q_1\Gamma(k-\ell+1) & 0 \end{pmatrix}
\end{align*}
where $\frac{1}{\Gamma}(1-\ell)=0$ and $(\frac{1}{\Gamma})'(1-\ell)=(-1)^{\ell-1}(\ell-1)!$ have been used. In particular, \eqref{AsymExp} yields $D_\ell\frac{1}{\Gamma}(-\ell-Q)\Gamma(k-\ell-Q)=O$ for $\ell = 1,\ldots,m$, and in case of $\ell = m+1$ we receive
\begin{equation*}
D_{m+1}\frac{1}{\Gamma}(-m-1-Q)\Gamma(k-m-1-Q)
 = (-1)^{m}m!\,q_1\Gamma(k-m)D_{m+1}\begin{pmatrix} 0 & 0 \\[1ex] 1 & 0 \end{pmatrix}
\end{equation*}
If we apply these results to \eqref{dCoeff} with $n=m+1$, we obtain
\begin{align*}
d_k 
& = \sum_{\ell=0}^{m+1}\frac{1}{\Gamma(k+1)}D_\ell\frac{1}{\Gamma}(-\ell-Q)\Gamma(k-\ell-Q)c + \Osym\left(k^{\varepsilon-m-2}\right) \\
& = D_0 c + (-1)^{m}m!\,q_1\frac{\Gamma(k-m)}{\Gamma(k+1)}D_{m+1}\begin{pmatrix} 0 & 0 \\[1ex] 1 & 0 \end{pmatrix}c + \Osym\left(k^{\varepsilon-m-2}\right)
\end{align*} 
or, by simplification,
\begin{equation*}
d_k = c_1 e_1 + c_1(-1)^{m}m!\,q_1 k^{-m-1}b_0  + \Osym\left(k^{\varepsilon-m-2}\right)
\end{equation*}
Multiplying this asymptotic expansion from the left by the vector $\vartheta\T$, which is orthogonal to $b_0$, we end up with \eqref{LimCoeff}.
\end{proof}

In order to get the connection coefficient $c_1=\lim_{k\to\infty}\vartheta\T d_k$, we have to determine the coefficients $d_k$ of the holomorphic function $\eta_0(z)=\sum_{k=0}^\infty d_k z^k$ 
satisfying $\eta_0'(z) = \big(\frac{1}{z}A_0 + \frac{1}{z-1}A_1 + C\big)\eta_0(z)$, where the matrices $A_0$, $A_1$, $C$ are defined in \eqref{ResMat} and the vector $d_0$ is given by \eqref{HolSol}. If we introduce
\begin{equation*}
B := A_0+A_1-C+E = \begin{pmatrix} -\mu-1 & 2\beta+4\gamma^2 \\[1ex] -1 & \mu+1 \end{pmatrix},\quad
d_{-1} := \begin{pmatrix} 0 \\[1ex] 0 \end{pmatrix}
\end{equation*}
then $d_k$ can be uniquely determined by the recurrence relation
\begin{equation} \label{RecForm}
d_k = (A_0-k)^{-1}\left((B-k)d_{k-1} + C d_{k-2}\right)\quad\mbox{for}\quad k = 1,2,3,\ldots
\end{equation}
Since $B$, $C$ are independent of $t$ and
\begin{equation*}
(A_0-k)^{-1} = \begin{pmatrix} -\frac{1}{\mu+k+1} & \frac{t-\beta}{k(\mu+k+1)} \\[1ex] 0 & -\frac{1}{k} \end{pmatrix}
\end{equation*}
the components of the vector $d_k=d_k(t)$ are polynomials in $t$. To obtain their leading coefficients, we need to evaluate the recursion formula \eqref{RecForm} to some extend. For this reason, we denote by $\Poly_n(t)$ an arbitrary polynomial in $t$ of degree equal or less than $n$, and we set $\Poly_n(t)\equiv 0$ in the case $n<0$. By induction we will prove that
\begin{equation} \label{Polynom}
d_k = \begin{pmatrix} r_k t^{k+1} + \Poly_k(t) \\[1ex] s_k t^k + \Poly_{k-1}(t) \end{pmatrix}
\end{equation}
for all non-negative integers $k$. In fact, this is true for $k=0$, since
\begin{equation*}
d_0 = \begin{pmatrix} \frac{\beta-t}{\mu+1} \\[1ex] 1 \end{pmatrix}
= \begin{pmatrix} r_0 t + \Poly_0(t) \\[1ex] 1 \end{pmatrix}
\quad\mbox{with}\quad r_0 = -\tfrac{1}{\mu+1}\quad\mbox{and}\quad s_0 = 1
\end{equation*}
Further, assuming that the vectors $d_{k-1}$, $d_{k-2}$ in the recurrence relation \eqref{RecForm}
already have the form \eqref{Polynom} and taking into account, that the term $C d_{k-2}$ is absent in the case $k=1$, then we get for $k>0$
\begin{align*}
d_k & = \begin{pmatrix} -\frac{1}{\mu+k+1} & \frac{t-\beta}{k(\mu+k+1)} \\[1ex] 0 & -\frac{1}{k} \end{pmatrix}\left(\begin{pmatrix} -\mu-1-k & 2\beta+4\gamma^2 \\[1ex] -1 & \mu+1-k \end{pmatrix}
\begin{pmatrix} r_{k-1}\,t^k + \Poly_{k-1}(t) \\[1ex] s_{k-1}\,t^{k-1} + \Poly_{k-2}(t) \end{pmatrix} + {}\right. \\ & \quad \left. {} + \begin{pmatrix} 0 & -4\gamma^2 \\[1ex] 1 & 0 \end{pmatrix}\begin{pmatrix} r_{k-2}\,t^{k-1} + \Poly_{k-2}(t) \\[1ex] s_{k-2}\,t^{k-2} + \Poly_{k-3}(t) \end{pmatrix}\right) 
= \begin{pmatrix} r_k t^{k+1} + \Poly_{k-1}(t) \\[1ex] s_k t^k + \Poly_{k-1}(t) \end{pmatrix}
\end{align*}
where $r_k = -\frac{1}{k(\mu+k+1)}\,r_{k-1}$ and $s_k = \frac{1}{k}\,r_{k-1}=-(\mu+k+1)r_k$. Starting with $r_0 = -\tfrac{1}{\mu+1}$, we obtain
\begin{equation*}
r_k = \frac{(-1)^{k+1}}{k!\,(\mu+1)(\mu+2)\cdots(\mu+k+1)},\quad k=0,1,2,3,\ldots
\end{equation*}
If we multiply $d_k$ from the left by $\vartheta\T$, then we get
\begin{equation*}
\Theta_k(t) = r_k t^{k+1} - \tfrac{s_k}{\mu+1}\,t^{k+1} + \Poly_k(t)
= \left(2+\tfrac{k}{\mu+1}\right)r_k t^{k+1} + \mbox{(lower order terms in $t$)}
\end{equation*}
so that $\Theta_k(t)$ is a polynomial of degree $k+1$ in $t$. Now, according to \cref{lem:ConCoeff}, $\Theta(t) := \lim_{k\to\infty}\Theta_k(t)$ coincides with the connection coefficient $c_1$ between $\eta_0(z)$ and $\eta_1(z)$, and therefore $\lambda=t+\mu(\mu+1)$ is an eigenvalue of \eqref{CSWE} if and only if $\Theta(t)=0$. Hence, the eigenvalues of \eqref{CSWE} are exactly the zeros of the entire function $\Theta$ up to the translation by $\mu(\mu+1)$. Finally, in order to simplify the recurrence relation, we rearrange \eqref{RecForm} as follows:
\begin{align*}
d_k 
& = (A_0-k)^{-1}\big((A_0-k+A_1-C+E)d_{k-1} + C d_{k-2}\big) \\
& = d_{k-1} + (A_0-k)^{-1}\big((A_1+E)d_{k-1} - C(d_{k-1}-d_{k-2})\big)
\end{align*}
If we set $u_k := d_k-d_{k-1}$, then this expression becomes
\begin{align*}
u_k & = (A_0-k)^{-1}\big((A_1+E)d_{k-1} - Cu_{k-1}\big) \\ 
d_k & = d_{k-1}+u_k,\quad k=1,2,3,\ldots
\end{align*}
starting with $u_0=d_0$. Note that
\begin{equation*}
(A_0-k)^{-1}(A_1+E) 
= \begin{pmatrix} 0 & \frac{t-\beta}{k}-\frac{2t}{k+\mu+1} \\[1ex] 
  0 & -\frac{\mu+1}{k} \end{pmatrix},\ \ 
(A_0-k)^{-1}C 
= \begin{pmatrix} \frac{t-\beta}{k(k+\mu+1)} & \frac{4\gamma^2}{k+\mu+1} \\[1ex]
  -\frac{1}{k} & 0 \end{pmatrix}
\end{equation*}
This completes the proof of assertions (a) -- (c) in \cref{thm:MainRes}. It remains to verify the explicit expression in (d) for $\beta=\gamma=0$ and $t=(\nu+\mu+1)(\nu-\mu)$. In this case $\lambda=t+\mu(\mu+1) = \nu(\nu+1)$, and \eqref{CSWE} simply becomes the associated Legendre differential equation
\begin{equation} \label{Legendre}
\D{}{x}\left((1-x^2)\D{}{x}w(x)\right) + \left(\nu(\nu+1) - \frac{\mu^2}{1-x^2}\right)w(x) = 0
\end{equation}
Let us initially assume $\nu\pm\mu\neq -1,-2,-3,\ldots$ in addition to $\re\mu>0$ or $\mu=0$. We can then give two fundamental solutions of \eqref{Legendre}, namely the associated Legendre function of the first kind with the asymptotic behavior
\begin{equation*}
P_{\nu}^{-\mu}(-x) \sim \tfrac{1}{\Gamma(1+\mu)}\left(\tfrac{1+x}{2}\right)^{\mu/2}
\quad\mbox{for}\quad x\to -1
\end{equation*}
and the associated Legendre function of the second kind $Q_{\nu}^{-\mu}(-x)$, which is not bounded as $x\to -1$ (cf. \cite[Section 4.8.2]{MOS:1966}). If we set $z=\frac{1+x}{2}$, then the second component of the vector function $y_0(z)$ given by \eqref{FloqSol0} is a solution of \eqref{Legendre} which is bounded near $x=-1$. Hence, $y_0(z)$ takes the form
\begin{equation*}
y_0(z) = \left(\tfrac{1+x}{2}\right)^{\mu/2}\begin{pmatrix} \ast \\[1ex] 1+o(1) \end{pmatrix}
= \begin{pmatrix} \ast \\[1ex] \Gamma(\mu+1)P_{\nu}^{-\mu}(-x) \end{pmatrix}
\end{equation*}
The associated Legendre functions $P_{\nu}^{-\mu}(x)$, $Q_{\nu}^{-\mu}(x)$ with the asymptotic behavior
\begin{equation*}
P_{\nu}^{-\mu}(x) \sim \tfrac{1}{\Gamma(1+\mu)}\left(\tfrac{1-x}{2}\right)^{\mu/2},\quad
Q_{\nu}^{-\mu}(x) \sim 
\left\{\begin{array}{ll}
\tfrac{\Gamma(\mu)\Gamma(\nu-\mu+1)}{2\,\Gamma(\nu+\mu+1)}\left(\tfrac{1-x}{2}\right)^{-\mu/2},
& \re\mu>0 \\[2ex]
-\tfrac{1}{2}(1+o(1))\log\left(\tfrac{1-x}{2}\right),
& \mu=0\end{array}\right.
\end{equation*}
as $x\to 1$ form another fundamental basis of \eqref{Legendre}. Moreover, the second component of $y_1(x)$ given by \eqref{FundSol1} is a solution of \eqref{Legendre}, where \eqref{Asymp1} implies
\begin{equation*}
y_1(z) = \begin{pmatrix} \ast \\[1ex] 
\left(\frac{1-x}{2}\right)^{-\mu/2}(p_1+o(1)) + q_1(1+o(1))\left(\frac{1-x}{2}\right)^{\mu/2}\log\left(\frac{1-x}{2}\right) \end{pmatrix}
\end{equation*}
as $x\to 1$, where $p_1=\frac{1}{\mu}$ if $\re\mu>0$ and $p_1=0$, $q_1=-1$ in the case $\mu=0$. Therefore, if $\re\mu>0$, then above expression for $y_1(z)$ in combination with $\mu\,\Gamma(\mu)=\Gamma(\mu+1)$ yields
\begin{equation} \label{Connect}
y_1(z) = \begin{pmatrix} \ast \\[1ex] \frac{2\,\Gamma(\nu+\mu+1)}{\Gamma(\mu+1)\Gamma(\nu-\mu+1)}Q_{\nu}^{-\mu}(x) + \omega_1 P_{\nu}^{-\mu}(x) \end{pmatrix}
\end{equation}
with some constant $\omega_1$. In case of $\mu=0$ it follows that
\begin{equation*}
y_1(z) 
= \begin{pmatrix} \ast \\[1ex] 2\,Q_{\nu}^0(x) + \omega_1 P_{\nu}^0(x) \end{pmatrix}
= \begin{pmatrix} \ast \\[1ex] \frac{2\,\Gamma(\nu+0+1)}{\Gamma(0+1)\Gamma(\nu-0+1)}\,Q_{\nu}^0(x) + \omega_1 P_{\nu}^0(x) \end{pmatrix}
\end{equation*}  
and hence \eqref{Connect} is also valid for $\mu=0$. Furthermore, the second component of $y_2(z)$ given by \eqref{FundSol1} is bounded near $x=1$, and therefore
\begin{equation*}
y_2(z) = \begin{pmatrix} \ast \\[1ex] \omega_2 P_{\nu}^{-\mu}(x) \end{pmatrix}
\end{equation*}  
with some constant $\omega_2$. Now, if we write $y_0(z) = c_1 y_1(z) + c_2 y_2(z)$ with the connection coefficients $c_1$ and $c_2$, then we get
\begin{equation} \label{Connect1}
\begin{pmatrix} \ast \\[1ex] \Gamma(\mu+1)P_{\nu}^{-\mu}(-x) \end{pmatrix}
= \begin{pmatrix} \ast \\[1ex] \frac{2\,\Gamma(\nu+\mu+1)c_1}{\Gamma(\mu+1)\Gamma(\nu-\mu+1)}Q_{\nu}^{-\mu}(x) + (c_1\omega_1+c_2\omega_2)P_{\nu}^{-\mu}(x) \end{pmatrix}
\end{equation}
On the other hand, according to the connection formula \cite[14.9.10]{NIST:2010}, we have
\begin{equation} \label{Connect2}
P_{\nu}^{-\mu}(-x) = \cos\left((\nu-\mu)\pi\right)P_{\nu}^{-\mu}(x) - \tfrac{2}{\pi}\sin\left((\nu-\mu)\pi\right)Q_{\nu}^{-\mu}(x)
\end{equation}
Comparing \eqref{Connect1} to \eqref{Connect2} yields $-\tfrac{2}{\pi}\,\Gamma(\mu+1)\sin\left((\nu-\mu)\pi\right) = \frac{2\,\Gamma(\nu+\mu+1)c_1}{\Gamma(\mu+1)\Gamma(\nu-\mu+1)}$ and therefore
\begin{equation*}
\Theta(t) = c_1(t) = \frac{\sin\left((\mu-\nu)\pi\right)\Gamma(\mu+1)^2\Gamma(\nu-\mu+1)}{\pi\Gamma(\nu+\mu+1)}
\end{equation*}
If we apply the functional relation
$\frac{\sin(\pi z)}{\pi}=\frac{1}{\Gamma(z)\Gamma(1-z)}$ with $z=\mu-\nu$, then we get
\begin{equation*}
\Theta(t) = \frac{\Gamma(\mu+1)^2}{\Gamma(\mu-\nu)\Gamma(\mu+\nu+1)},\quad\mbox{where}\quad 
t=(\nu+\mu+1)(\nu-\mu)
\end{equation*}
We have proved this relation under the additional assumption $\nu\pm\mu\neq -1,-2,-3,\ldots$, but since the expressions on both sides are entire functions, it is valid for all $\nu\in\C$.
Finally, if we set $\tau := \sqrt{t+(\mu+\tfrac{1}{2})^2}$, where $-\frac{\pi}{2}<\arg(\tau)\leq\frac{\pi}{2}$, then $\nu = -\frac{1}{2}+\tau$ and
\begin{equation*}
\Theta(t) 
= \frac{\cos\left((\tau-\mu)\pi\right)\Gamma(\mu+1)^2\Gamma(\tau+\frac{1}{2}-\mu)}{\pi\Gamma(\tau+\frac{1}{2}+\mu)} = \frac{\Gamma(\mu+1)^2}{\Gamma(\mu+\frac{1}{2}+\tau)\Gamma(\mu+\frac{1}{2}-\tau)}
\end{equation*}
which completes the proof of \cref{thm:MainRes}.
  
\section{Notes on the generalized equation} 
\label{sec:GSWE} 

In this paper a new approach for the computation of Coulomb spheroidal eigenvalues has been presented: The eigenvalues are the zeros of a holomorphic function that can be obtained with a comparatively simple recurrence procedure which also provides the coefficients for the series expansion of the corresponding Coulomb spheroidal wave functions. Of course, all results can be applied to the angular spheroidal wave equation as well. Even more, the method introduced here may also be used for computing the eigenvalues of the generalized spheroidal wave equation \eqref{GSWE}. By means of
\begin{equation*}
v(x) := 2w'(x) + \frac{2(\mu x + \alpha)}{1-x^2}\,w(x),\quad
y(z) = \begin{pmatrix} v(2z-1) \\[1ex] w(2z-1) \end{pmatrix}
\end{equation*}
the GSWE is equivalent to the $2\times 2$ system
\begin{equation} \label{ExtSys}
y'(z) = \left(\frac{1}{z}\begin{pmatrix} -\frac{\mu-\alpha}{2}-1 & \beta-t \\[1ex] 0 & \frac{\mu-\alpha}{2} \end{pmatrix} + \frac{1}{z-1}\begin{pmatrix} -\frac{\mu+\alpha}{2}-1 & \beta+t \\[1ex] 0 & \frac{\mu+\alpha}{2} \end{pmatrix} + \begin{pmatrix} 0 & -4\gamma^2 \\[1ex] 1 & 0 \end{pmatrix}\right)y(z)
\end{equation}
where $t := \lambda-\mu(\mu+1)$; it coincides with \eqref{RegSing} except for the diagonal entries and the Floquet exponents, respectively. Since the GSWE remains unchanged if we replace $\mu$, $\alpha$ by $-\mu$, $-\alpha$ or interchange $\mu$ and $\alpha$, we can assume without loss of generality that $\re(\mu-\alpha)\geq 0$ as well as $\re(\mu+\alpha)\geq 0$ holds. Moreover, if we substitute $-x$ for $x$ in \eqref{CSWE}, then we get a GSWE with parameters $-\alpha$, $-\beta$ instead of $\alpha$, $\beta$. Therefore, we may further suppose without restriction that $\re\alpha\geq 0$. Under these assumptions, the considerations in \cref{sec:Proof} concerning the Floquet solutions and their connection coefficients remain valid for the system \eqref{ExtSys}, requiring only minor formal adjustments. In the following, we will briefly sketch the key steps, and for convenience we will additionally assume $\re(\mu\pm\alpha)>0$ and $\mu\pm\alpha\not\in\Z$.

If we apply \cref{lem:StructFM} to both \eqref{ExtSys} and the system transformed with $\tilde y(z) := y(1-z)$, then we obtain two fundamental matrices
\begin{align*}
Y_0(z) & = \begin{pmatrix} \frac{1}{z} & 0 \\[1ex] 0 & 1 \end{pmatrix}H_0(z)
\begin{pmatrix} z^{-(\mu-\alpha)/2} & 0 \\[1ex] 0 & z^{(\mu-\alpha)/2} \end{pmatrix} \\
Y_1(z) & = \begin{pmatrix} \frac{1}{1-z} & 0 \\[1ex] 0 & 1 \end{pmatrix}H_1(z)
\begin{pmatrix} (1-z)^{-(\mu+\alpha)/2} & 0 \\[1ex] 0 & (1-z)^{(\mu+\alpha)/2} \end{pmatrix}
\end{align*}
where $H_a(z)$ are holomorphic matrix functions on $\mathfrak{D}_a := \{z\in\C:|z-a|<1\}$ for $a\in\{0,1\}$ satisfying
\begin{equation*}
H_0(0) = \begin{pmatrix} 1 & 0 \\[1ex] \frac{1}{\alpha-\mu} & 1 \end{pmatrix}
\quad\mbox{and}\quad
H_1(1) = \begin{pmatrix} 1 & 0 \\[1ex] \frac{1}{\mu+\alpha} & 1 \end{pmatrix}
\end{equation*}
Similar to \cref{lem:RegSol0}, $\lambda = t+\mu(\mu+1)$ is an eigenvalue of \eqref{GSWE} if and only if
\begin{align*}
y_0(z) & = Y_0(z)e_2 = z^{(\mu-\alpha)/2}\sum_{k=0}^\infty z^k a_k,\quad
a_0 := \begin{pmatrix} \frac{\beta-t}{\mu-\alpha+1} \\[1ex] 1 \end{pmatrix} \\
Y_1(z)e_2 & = (1-z)^{(\mu+\alpha)/2}\sum_{k=0}^\infty (1-z)^k b_k,\quad
b_0 := \begin{pmatrix} \frac{\beta+t}{\mu+\alpha+1} \\[1ex] 1 \end{pmatrix}
\end{align*}
are constant multiples of each other. On the other hand, we have
\begin{equation*}
y_0(z) = Y_1(z)\begin{pmatrix} c_1(t) \\[1ex] c_2(t) \end{pmatrix}
\end{equation*}
where the connection coefficients $c_1=c_1(t)$, $c_2=c_2(t)$ depend holomorphically on $t\in\C$. Hence, $\lambda=t+\mu(\mu+1)$ is an eigenvalue of \eqref{GSWE} if and only if $c_1(t)=0$.
To obtain a simple calculation formula for $c_1(t)$, we apply the transformation $\eta(z) = z^{-(\mu-\alpha)/2}(1-z)^{(\mu+\alpha)/2}y(z)$. The resulting system
\begin{equation*}
\eta'(z) = \left(\frac{1}{z}\begin{pmatrix} -\mu+\alpha-1 & \beta-t \\[1ex] 0 & 0 \end{pmatrix} + \frac{1}{z-1}\begin{pmatrix} -1 & \beta+t \\[1ex] \ms 0 & \mu+\alpha \end{pmatrix} + \begin{pmatrix} 0 & -4\gamma^2 \\[1ex] 1 & 0 \end{pmatrix}\right)\eta(z)
\end{equation*}
has a holomorphic solution $\eta_0(z) = \sum_{k=0}^\infty z^k d_k$ on $\mathfrak{D}_0$ with $d_0 = a_0$ and two fundamental solutions
\begin{equation*}
\eta_1(z) =  (1-z)^{-1}\sum_{k=0}^\infty (1-z)^k d_k^{(1)},\quad
\eta_2(z) = (1-z)^{\mu+\alpha}\sum_{k=0}^\infty (1-z)^k d_k^{(2)}
\end{equation*}
on $\mathfrak{D}_1$, where $d_k^{(1)} = e_1$ and $d_k^{(2)} = b_0$; they are connected by $\eta_0(z) = c_1\eta_1(z) + c_2\eta_2(z)$. Now, as in \cref{lem:ConCoeff} we obtain
\begin{equation*}
d_k = c_1 e_1 + c_2\frac{\Gamma(k-\mu-\alpha)}{\Gamma(k+1)\Gamma(-\mu-\alpha)}b_0 + \Osym(k^{-\mu-\alpha-2})
\end{equation*}
Further, if we define
\begin{equation*}
\vartheta = \begin{pmatrix} \ms 1 \\[1ex] -\frac{\beta+t}{\mu+\alpha+1} \end{pmatrix}
\end{equation*}
then $\vartheta\T e_1=1$ and $\vartheta\T b_0 = 0$. Hence, $\vartheta\T d_k = c_1(t) + \Osym(k^{-\mu-\alpha-2})$ as $k\to\infty$. Finally, the zeros $t_n$ of the entire function $\Theta(t) := c_1(t) = \lim_{k\to\infty}\vartheta\T d_k$ yield the eigenvalues $\lambda_n=t_n+\mu(\mu+1)$ of \eqref{GSWE}. 

The function $\Theta(t)$ can be calculated by a procedure similar to that in \cref{cor:NumCom}: If we define recursively 
\begin{align*}
a_k & := \frac{(\beta-t)a_{k-1}}{k(k+\mu-\alpha+1)} - \frac{4\gamma^2 b_{k-1}}{k+\mu-\alpha+1} 
       + \frac{(t-\beta)(\mu+\alpha+1)-(t+\beta)k}{k(k+\mu-\alpha+1)}\,w_{k-1} \\
b_k & := \frac{1}{k}\,a_{k-1} - \frac{\mu+\alpha+1}{k}\,w_{k-1},\quad w_k := b_k + w_{k-1}, \quad
\Theta_k := \Theta_{k-1} + a_k - \frac{(\beta+t)b_k}{\mu+\alpha+1}
\end{align*}
for $k=1,2,3,\ldots$ starting with $a_0 =\frac{\beta-t}{\mu-\alpha+1}$, $b_0=w_0=1$,
$\Theta_0 = \frac{\beta-t}{\mu-\alpha+1}-\frac{\beta+t}{\mu+\alpha+1}$, then
\begin{equation*}
\Theta_k = \Theta(t) + \Osym(k^{-\mu-\alpha-2})
\end{equation*}
as $k\to\infty$. Now, $\lambda\in\C$ is an eigenvalue of the generalized spheroidal wave equation \eqref{GSWE} if and only if $t = \lambda-\mu(\mu+1)$ is a zero of $\Theta$, and in this case the corresponding eigenfunctions are constant multiples of
\begin{equation*}
w(x) := \frac{(1+x)^{(\mu-\alpha)/2}}{(1-x)^{(\mu+\alpha)/2}}\sum_{k=0}^\infty\frac{w_k}{2^k}\,(1+x)^k
\end{equation*}
\Cref{fig:GSWF} shows some eigenfunctions of  \eqref{GSWE} for a sample set of parameter values.

\begin{figure}[tbhp]
\centering
\includegraphics{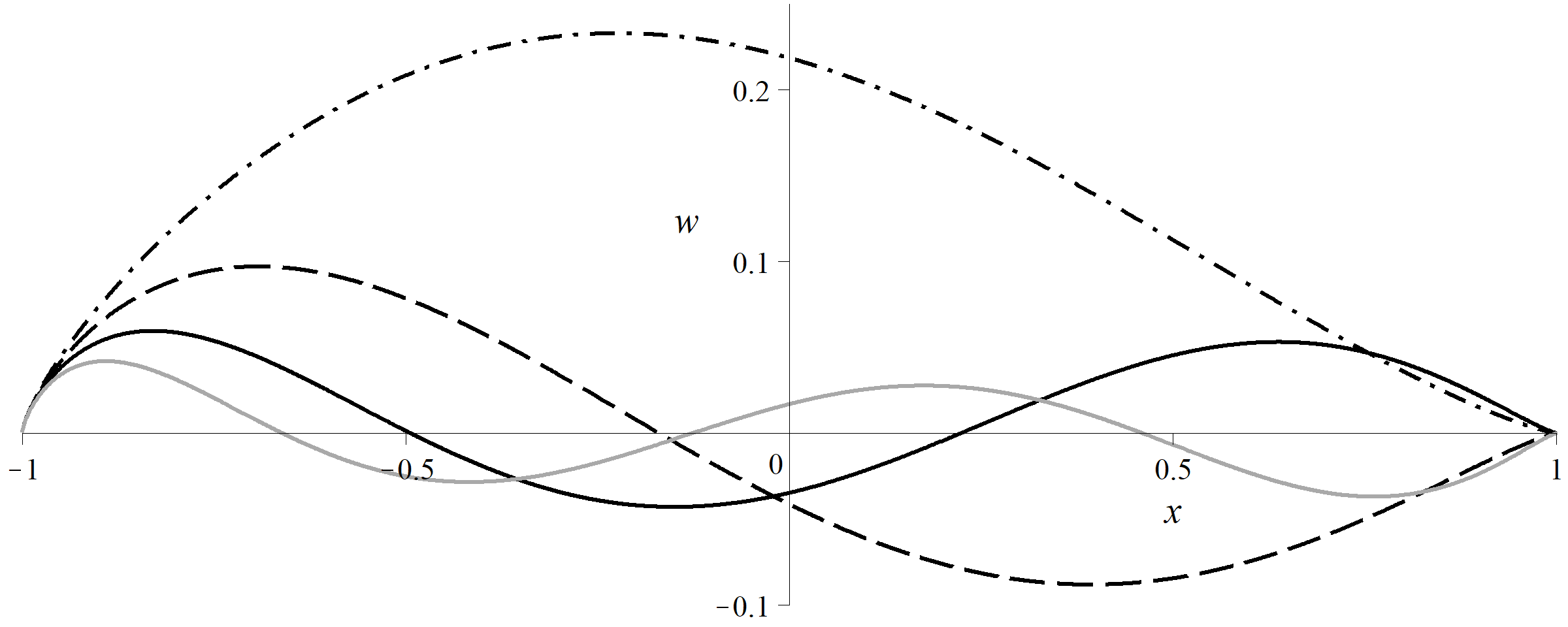}
\caption{Eigenfunctions $w(x)$ for the generalized spheroidal wave equation \eqref{GSWE} with parameters $\alpha=\frac{1}{2}$, $\beta=-1$, $\gamma=2$ and $\mu=2$. They are associated to the four lowest eigenvalues $\lambda_1 = 2.472312$, $\lambda_2 = 9.211599$, $\lambda_3 = 17.539555$, $\lambda_4 = 27.700922$ (each rounded to six decimal places).}\label{fig:GSWF}
\end{figure}

In our investigation of the GSWE, we have focused on parameter values for which the condition $\mu\pm\alpha\not\in\Z$ is satisfied in addition to $\re(\mu\pm\alpha)>0$. Basically, the algorithm described above still works if $\mu+\alpha\in\Z$ or $\mu-\alpha\in\Z$, but we will not discuss these special cases in more detail.


\begin{thebibliography}{10}

\bibitem{AS:1972}
{\sc M.~Abramowitz and I.~A. Stegun}, {\em Handbook of Mathematical functions},
  Applied Mathematics Series 55, National Bureau of Standards -- U.S.
  Government Printing Office, Washington, D.C., 10th~ed., 1972.

\bibitem{BSW:2004}
{\sc D.~Batic, H.~Schmid, and M.~Winklmeier}, {\em On the eigenvalues of the
  {Chandrasekhar}-{Page} angular equation}, Journal of Mathematical Physics, 46
  (2005), p.~012504, \url{https://doi.org/10.1063/1.1818720}.

\bibitem{CS:1941}
{\sc L.~J. Chu and J.~A. Stratton}, {\em Elliptic and spheroidal wave
  functions}, Journal of Mathematics and Physics, 20 (1941), pp.~259--309,
  \url{https://doi.org/10.1002/sapm1941201259}.

\bibitem{Falloon:2001}
{\sc P.~E. Falloon}, {\em Theory and computation of spheroidal harmonics with
  general arguments}, master's thesis, The University of Western Australia,
  Department of Physics, 2001.

\bibitem{FAW:2003}
{\sc P.~E. Falloon, P.~C. Abbott, and J.~B. Wang}, {\em Theory and computation
  of spheroidal wavefunctions}, Journal of Physics A: Mathematical and General,
  36 (2003), p.~5477, \url{https://doi.org/10.1088/0305-4470/36/20/309}.

\bibitem{Flammer:1957}
{\sc C.~Flammer}, {\em Spheroidal Wave Functions}, Stanford University Press,
  Stanford, CA, 1957.

\bibitem{Hodge:1970}
{\sc D.~B. Hodge}, {\em Eigenvalues and eigenfunctions of the spheroidal wave
  equation}, Journal of Mathematical Physics, 11 (1970), p.~2308,
  \url{https://doi.org/10.1063/1.1665398}.

\bibitem{KPS:1976}
{\sc I.~V. Komarov, L.~I. Ponomarev, and S.~Y. Slavyanov}, {\em Sferoidalnye i
  kulonovskie sferoidalnye funktsiin}, Izdatel ``Nauka'', Moscow, 1976.
\newblock In Russian.

\bibitem{Leaver:1986}
{\sc E.~W. Leaver}, {\em Solutions to a generalized spheroidal wave equation:
  Teukolsky's equations in general relativity, and the two-center problem in
  molecular quantum mechanics}, Journal of Mathematical Physics, 27 (1986),
  p.~1238, \url{https://doi.org/10.1063/1.527130}.

\bibitem{MOS:1966}
{\sc W.~Magnus, F.~Oberhettinger, and R.~P. Soni}, {\em Formulas and Theorems
  for the Special Functions of Mathematical Physics}, vol. 52 of Grundlehren
  der mathematischen Wissenschaften, Springer, Berlin -- Heidelberg, 3rd~ed.,
  1966, \url{https://doi.org/10.1007/978-3-662-11761-3}.

\bibitem{MS:1954}
{\sc J.~Meixner and F.~W. Schäfke}, {\em Mathieusche Funktionen und
  Sphäroidfunktionen}, Springer, New York, 1954,
  \url{https://doi.org/10.1007/978-3-662-00941-3}.
\newblock In German.

\bibitem{NIST:2010}
{\sc F.~Olver, D.~Lozier, R.~Boisvert, and C.~Clark}, {\em NIST Handbook of
  Mathematical Functions}, Cambridge University Press, Washington, D.C., 2010.

\bibitem{Schaefke:1980}
{\sc R.~Schäfke}, {\em The connection problem for two neighboring regular
  singular points of general linear complex ordinary differential equations},
  SIAM Journal on Mathematical Analysis, 11 (1980), pp.~863--875,
  \url{https://doi.org/10.1137/0511077}.

\bibitem{SS:1980}
{\sc R.~Schäfke and D.~Schmidt}, {\em The connection problem for general linear
  ordinary differential equations at two regular singular points with
  applications in the theory of special functions}, SIAM Journal on
  Mathematical Analysis, 11 (1980), pp.~848--862,
  \url{https://doi.org/10.1137/0511076}.

\bibitem{Skoro:2015}
{\sc S.~L. Skorokhodov}, {\em Evaluation of eigenvalues and eigenfunctions of
  {Coulomb} spheroidal wave equation}, Matematicheskoe modelirovanie, 27
  (2015), pp.~111--116.
\newblock In Russian.

\bibitem{SL:2000}
{\sc S.~Y. Slavyanov and W.~Lay}, {\em Special Functions: A Unified Theory
  Based on Singularities}, Oxford University Press, 2000.

\bibitem{SL:1964}
{\sc M.~M. Stuckey and L.~L. Layton}, {\em Numerical determination of
  spheroidal wave function eigenvalues and expansion coefficients}, David
  Taylor Model Basin, Applied Mathematics Lab, Washington, D.C., 1964.

\bibitem{TSWF:1975}
{\sc A.~L. van Buren, B.~J. King, R.~V. Baier, and S.~Hanish}, {\em Tables of
  angular spheroidal wave functions}, vol.~1--8, Naval Research Laboratory,
  Washington, D.C., 1975.

\bibitem{Walter:1998}
{\sc W.~Walter}, {\em Ordinary Differential Equations}, vol. 182 of Graduate
  Texts in Mathematics, Springer, New York, 1998,
  \url{https://doi.org/10.1007/978-1-4612-0601-9}.

\bibitem{Wasow:1976}
{\sc W.~Wasow}, {\em Asymptotic expansions for ordinary differential
  equations}, Robert E. Krieger Publishing Co., Inc., Huntington, New York,
  2nd~ed., 1976.

\bibitem{ZJ:1996}
{\sc S.~Zhang and J.~Jin}, {\em Computation of Special Functions}, John Wiley
  \& Sons Inc., New York, 1996.

\end{thebibliography}
\end{document}